\newenvironment{sequation}{\begin{equation}\small}{\end{equation}}
\newtheorem{theorem}{\textbf{Theorem}}
\newtheorem{definition}{\textbf{Definition}}
\definecolor{r}{rgb}{1, 0, 0}
\definecolor{b}{rgb}{0, 0, 1}
\def\BibTeX{{\rm B\kern-.05em{\sc i\kern-.025em b}\kern-.08em
		T\kern-.1667em\lower.7ex\hbox{E}\kern-.125emX}}
\begin{document}
\pagestyle{empty} 

\title{\fontsize{20pt}{24pt}\selectfont Digital Twin-Assisted Space-Air-Ground Integrated Multi-Access Edge Computing for Low-Altitude Economy: An Online Decentralized Optimization Approach}

\author{Long~He,
        Geng~Sun,~\IEEEmembership{Senior Member,~IEEE},
        Zemin Sun, 
        Jiacheng~Wang, 
        Hongyang~Du, \\
        Dusit Niyato,~\IEEEmembership{Fellow,~IEEE},
        Jiangchuan~Liu,~\IEEEmembership{Fellow,~IEEE}, and 
        Victor C. M. Leung,~\IEEEmembership{Life Fellow,~IEEE}
	\thanks{This study is supported in part by the National Natural Science Foundation of China (62272194, 62471200), and in part by the Science and Technology Development Plan Project of Jilin Province (20230201087GX). (\textit{Corresponding author: Geng Sun.)}}
 \IEEEcompsocitemizethanks{
    \IEEEcompsocthanksitem Long He, and Zemin Sun are with the College of Computer Science and Technology, Jilin University, Changchun 130012, China (e-mail: helong23@mails.jlu.edu.cn, sunzemin@jlu.edu.cn).
     \IEEEcompsocthanksitem Geng Sun is with the College of Computer Science and Technology, Jilin University, Changchun 130012, China, and also with the College of Computing and Data Science, Nanyang Technological University, Singapore 639798 (e-mail: sungeng@jlu.edu.cn).
     \IEEEcompsocthanksitem Jiacheng Wang and Dusit Niyato are with the College of Computing and Data Science, Nanyang Technological University, Singapore 639798 (e-mail: jiacheng.wang@ntu.edu.sg, dniyato@ntu.edu.sg).
    \IEEEcompsocthanksitem Hongyang Du is with the Department of Electrical and Electronic Engineering, University of Hong Kong, Pok Fu Lam, Hong Kong (e-mail: duhy@eee.hku.hk).
    \IEEEcompsocthanksitem Jiangchuan Liu is with the School of Computing Science, Simon Fraser University, Burnaby, BC V5A 1S6, Canada, and also with the R\&D Department, Jiangxing Intelligence Inc., Nanjing 210000, China (e-mail: jcliu@sfu.ca).
    \IEEEcompsocthanksitem Victor C. M. Leung is with the Artificial Intelligence Research Institute, Shenzhen MSU-BIT University, Shenzhen 518115, China, with the College of Computer Science and Software Engineering, Shenzhen University, Shenzhen 518060, China, and also with the Department of Electrical and Computer Engineering, The University of British Columbia, Vancouver V6T 1Z4, Canada (e-mail: vleung@ieee.org).}}
    
\IEEEtitleabstractindextext{
%
%
\begin{abstract}
\par The emergence of space-air-ground integrated multi-access edge computing (SAGIMEC) networks opens a significant opportunity for the rapidly growing low altitude economy (LAE), facilitating the development of various applications by offering efficient communication and computing services. However, the heterogeneous nature of SAGIMEC networks, coupled with the stringent computational and communication requirements of diverse applications in the LAE, introduces considerable challenges in integrating SAGIMEC into the LAE. In this work, we first present a digital twin-assisted SAGIMEC paradigm for LAE, where digital twin enables reliable network monitoring and management, while SAGIMEC provides efficient computing offloading services for Internet of Things sensor devices (ISDs). Then, a \underline{\textbf{j}}oint \underline{\textbf{s}}atellite selection, \underline{\textbf{c}}omputation offloading, \underline{\textbf{c}}ommunication resource allocation, \underline{\textbf{c}}omputation resource allocation and UAV trajectory \underline{\textbf{c}}ontrol optimization problem ($\text{JSC}^4\text{OP}$) is formulated to maximize the quality of service (QoS) of ISDs. Given the complexity of $\text{JSC}^4\text{OP}$, we propose an online decentralized optimization approach (ODOA) to address the problem. Specifically, $\text{JSC}^4\text{OP}$ is first transformed into a real-time decision-making optimization problem (RDOP) by leveraging Lyapunov optimization. Then, to solve the RDOP, we introduce an online learning-based latency prediction method to predict the uncertain system environment and a game theoretic decision-making method to make real-time decisions. Finally, theoretical analysis confirms the effectiveness of the ODOA, while the simulation results demonstrate that the proposed ODOA outperforms other alternative approaches in terms of overall system performance.
\end{abstract}

\begin{IEEEkeywords}
Space-air-ground integrated network, multi-access edge computing, computation offloading, trajectory control.
\end{IEEEkeywords}}	

\maketitle
\IEEEdisplaynontitleabstractindextext
\IEEEpeerreviewmaketitle 

%
%
\section{Introduction}
\label{sec:Introduction}
\par \IEEEPARstart{A}{s} an emerging sector in modern economic development, the low-altitude economy (LAE) is increasingly becoming a key driver of regional economic growth and technological advancement. Specifically, by efficiently developing and utilizing low-altitude airspace resources (typically referring to altitudes below 1,000 meters), the LAE encompasses a wide range of applications, including environmental monitoring, aerial inspections, and low-altitude tourism~\cite{ye2024integrated}. Accordingly, several companies have already ventured into unprecedented commercial opportunities presented by the LAE, including Kespry in inspection and surveying, Da-Jiang in manufacturing aircrafts, and Amazon Prime in logistics and delivery~\cite{jiang20236g}.

\par With the rapid development of the LAE, unmanned and manned aircraft, such as unmanned aerial vehicles (UAVs) and electric vertical take-off and landing (eVTOL), have been widely deployed across various domains. In particular, UAV-assisted multi-access edge computing (MEC) is recognized as a highly promising application in the LAE. Specifically, Internet of Things sensor devices (ISDs) are experiencing explosive growth, driving the development of numerous intelligent applications such as autonomous driving, augmented reality (AR), and virtual reality (VR)~\cite{ZhouGLQ24}, thereby creating significant social and commercial value. However, a critical challenge lies in the fact that these intelligent applications often involve computationally intensive and latency-sensitive tasks, which conflict with the limited computational resources and energy capacity of ISDs~\cite{Mao2017}. In traditional terrestrial networking, terrestrial MEC has been proposed to provide energy-efficient and low-latency computational offloading services for resource-limited ISDs. However, terrestrial MEC heavily relies on ground-based infrastructure, which limits its effectiveness in remote areas lacking infrastructure or in disaster-stricken regions where infrastructure is damaged. In LAE networking, the UAVs equipped with communication and computing capabilities can effectively compensate for the shortcomings of terrestrial MEC, owing to high line-of-sight (LoS) probability, cost-effectiveness, and flexible mobility. Consequently, UAV-assisted MEC has gained significant attention and has become a focal point of research~\cite{PervezSYZ24,Sun2024,Sun2024a,Sun2024b}. Nevertheless, the limited onboard capabilities of UAVs, particularly their constrained computational resources and energy supplies, remain a critical bottleneck restricting the improvement of system performance~\cite{QuDWDWGW21}.

\par Thanks to the rapid deployment of mega-Low Earth Orbit (LEO) satellite constellations such as Starlink~\cite{mcdowell2020low}, OneWeb~\cite{radtke2017interactions}, and Kuiper~\cite{liu2020automatic}, space-air-ground integrated MEC (SAGIMEC) network is emerging as a promising architecture to provide seamless computation offloading services. Specifically, SAGIMEC is usually a three-tier computing architecture that integrates heterogeneous network components, including a terrestrial network, an aerial UAV network, and a space low earth orbit (LEO) satellite network~\cite{WangJYZNTJ23}. On the one hand, due to the wide coverage of LEO satellites and flexible mobility of UAVs, SAGIMEC greatly expands the application scenarios and coverage of edge computing. On the other hand, the low transmission latency and seamless connectivity of LEO satellites enable SAGIMEC to effectively combine cloud-edge computing resources to improve the resource utilization.

\par Nevertheless, fully exploring the benefits of the SAGIMEC network faces several fundamental challenges. \textbf{\textit{i) Computation Offloading.}} The heterogeneity of the networks in SAGIMEC leads to an uneven distribution of resources. Moreover, different ISDs usually have diverse computing requirements for resources~\cite{GaoYY24}. As a result, the heterogeneous resource distribution and ISD requirements lead to the complexity of computation offloading decisions. \textbf{\textit{ii) Satellite Selection.}} The dynamic topology of satellite networks leads to time-varying and uncertain satellite link conditions~\cite{Zhang2024}. Therefore, when multiple satellites are accessible, it is challenging to select appropriate satellites as relay nodes for the efficient use of cloud computing services based on satellite networks. \textbf{\textit{iii) Resource Management.}} Tasks of ISDs are often computation-hungry and latency-sensitive, imposing strict requirements on computing and communication resources. However, UAV networks usually have limited computing and spectrum resources. Therefore, the strict computing requirements make resource allocation difficult in resource-constrained UAV networks~\cite{Sun2024b}. \textbf{\textit{iv) Trajectory Control.}} While the mobility of UAVs enhances the elasticity and flexibility of MEC, the limited onboard battery capacity of UAVs leads to finite service time~\cite{NguyenLG24}, which makes it challenging to balance both the service time of UAVs and the QoS of ISDs. Furthermore, the complexity and dynamics of the SAGIMEC network pose significant challenges in achieving efficient network management to meet the robustness requirements of the LAE.

\par To address the aforementioned challenges, we incorporate digital twin technology into the SAGIMEC network. By constructing digital space models to evaluate the state information of physical entities within the network, digital twin technology facilitates real-time network monitoring and network management, while providing insights for decision-making. Moreover, we propose a novel online decentralized optimization approach (ODOA) for SAGIMEC networks that enables the joint optimization of computation offloading, satellite selection, communication resource allocation, computation resource allocation, and UAV trajectory control, to maximize the QoS of ISDs. Our main contributions can be briefly outlined as follows:
\begin{itemize}
\item \textbf{\textit{New MEC Paradigm for LAE.}} We propose a novel digital twin-assisted SAGIMEC paradigm for LAE. In this paradigm, a UAV and a cloud center are seamlessly connected via a satellite network to facilitate high-quality computing offload services. Meanwhile, the digital twin facilitates real-time network management and provides insights for decision-making through comprehensive network monitoring. Notably, our proposed architecture is inherently scalable, enabling a seamless transition to multi-UAV scenarios. Moreover, within this framework, we consider the time-varying computing requirements of ISDs, the resource and energy constraints of the UAV, as well as the dynamics and uncertainties of the satellite links to more accurately capture the real-world physical characteristics of the SAGIMEC network.

\item \textbf{\textit{QoS-Oriented Optimization Problem Formulation.}} We formulate a joint satellite selection, computation offloading, communication and computation resource allocation, and UAV trajectory control optimization problem ($\text{JSC}^4\text{OP}$) to maximize the QoS of ISDs. Additionally, we develop a QoS assessment model, which integrates task completion latency and ISD energy consumption. Moreover, we demonstrate that $\text{JSC}^4\text{OP}$ is inherently challenging to be solved directly due to its reliance on future information, the presence of uncertain network parameters, and its non-convex and NP-hard characteristics.

\item \textbf{\textit{Novel Online Approach Design.}} To solve the $\text{JSC}^4\text{OP}$, we propose an online decentralized optimization approach (ODOA). Specifically, we first transform the $\text{JSC}^4\text{OP}$ into a real-time decision-making optimization problem (RDOP) that only depends on current information by using the Lyapunov optimization. Then, for the RDOP, we propose an online learning-based latency prediction method to predict uncertain network parameters and a game theoretic decision-making method to make real-time decisions.
 
\item \textbf{\textit{Theoretical Analysis and Simulation Experiments.}} The effectiveness and performance of the designed ODOA are confirmed through theoretical analysis and simulation experiments. In particular, the theoretical analysis establishes that the ODOA not only satisfies the UAV energy consumption constraint, but also exhibits low computational complexity. Additionally, the simulation results demonstrate that the ODOA outperforms other alternative approaches in terms of overall system performance.
	
\end{itemize}

\par The subsequent sections of this work are structured as follows. In Section \ref{sec:Related Work}, an overview of the related work is provided. Section \ref{sec:System Model and problem Formulation} details the relevant system models. Section \ref{sec:problem Formulation} presents the problem formulation and analysis. The Lyapunov-based problem transformation is described in Section \ref{sec:Lyapunov-Based problem Transformation}. In Section \ref{sec:Algorithm Design}, the algorithm design and theoretical analysis are provided. Then, we demonstrate and discuss the simulation results in Section \ref{sec:Simulation Results}. Finally, this work is concluded in Section \ref{sec:Conclusion}.

%
%
\section{Related Work}
\label{sec:Related Work}
\par In this section, we provide a comprehensive review of the relevant studies pertaining to SAGIMEC network architecture, formulation of optimization problems, and optimization approaches. Furthermore, we emphasize the key distinctions between our work and the existing research.
\subsection{Space-Air-Ground Integrated Multi-Access Edge Computing Network}
\par As an emerging technology, SAGIMEC networks have attracted extensive attention and research. Various network architectures have been proposed to enhance the efficiency of computation offloading services. For example, in~\cite{GaoYY24,PaulSNPL24}, the authors investigated the SAGIMEC architecture consisting of multiple ground base stations, UAVs, and LEO satellites to meet computation-intensive requests from ground devices. Du \emph{et al.}~\cite{DuWSQZWN24} explored the architecture of single satellite and multiple UAVs collaboration to provide computation offloading services for Internet of Things devices. Huang \emph{et al.} \cite{HuangCXXHC24} conducted a study on the integration of hybrid multi-cloud services and MEC within SAGIMEC, focusing on the dynamic access capabilities of UAVs, multi-satellite access, and cloud service selection. Shen \emph{et al.}~\cite{ShenTWB24} introduced a slice-oriented task offloading framework for space-air-ground integrated vehicular networks, aiming to deliver differentiated quality-of-service guarantees for high-speed vehicles. Yu \emph{et al.}~\cite{YuGSWC22} proposed an space-air-ground integrated network framework to provide various Internet of Vehicles services for vehicles in remote areas.
\par However, existing studies still have certain limitations. \textbf{First}, the aforementioned studies usually assume that the satellite link status can be accurately measured, which may be impractical because of the high-speed mobility, long-distance transmission, and time-varying network topology of satellite networks. \textbf{Furthermore}, the integration of digital twin with SAGIMEC holds significant potential for application, yet research in this area remains relatively scarce. To this end, we propose a digital twin-assisted SAGIMEC paradigm that accounts for dynamic and uncertain satellite links, filling a gap in existing research.
\subsection{Formulation of Optimization Problems}
\par The formulation of the optimization problem is critical for enhancing the performance of the SAGIMEC networks. For example, Gao \emph{et al.} \cite{GaoYY24} conducted a study on task hosting, computing offloading, computing resource allocation, and association control in a SAGIMEC network, with the aim of minimizing the time-averaged network cost. Huang \emph{et al.} \cite{HuangCXXHC24} investigated a partial task offloading strategy and jointly optimized task offloading, task partitioning, UAV trajectory, and computing resource allocation to minimize system energy consumption and latency. To maximize the number of tasks meeting the delay constraints, Zhang \emph{et al.}~\cite{ZhangLHLXW23} formulated a joint optimization problem for offloading destinations and offloading quantities. Cheng \emph{et al.}~\cite{ChengLQZHSS19} proposed an online computation offloading method to learn optimal offloading decisions, thereby effectively reducing the overall system cost. Nguyen \emph{et al.}~\cite{NguyenLG24} formulated an optimization problem that jointly considers user scheduling, computation resource and bandwidth allocation, bit allocation, partial offloading control, and UAV trajectory control, aiming to minimize system energy consumption.
\par This work differs from the aforementioned research in the following aspects. \textbf{First}, we incorporate a more comprehensive set of optimization metrics, including satellite selection, task offloading, communication and computing resource allocation, and UAV trajectory control, to fully exploit the benefit of the SAGIMEC network. \textbf{Moreover}, the aforementioned studies primarily focus on the overall system performance, whereas this work emphasizes the performance of terminal devices.

\subsection{Optimization Approaches}
\par To address complex optimization problems, numerous studies have focused on designing efficient optimization approaches. Some of these studies have proposed high-performance offline approaches, which typically assume that the edge computing scenarios is static or that the computational demands of terminal devices are known in advance~\cite{MaoLHAY24,ChenZWHS25,Xu2021,UAV-H,Hu2019}. However, in many edge computing scenarios, such as the metaverse or real-time video analytics, computational demands arrive in a stochastic manner. Additionally, satellites and UAVs in SAGIMEC networks exhibit inherent dynamic mobility. Therefore, online approaches are essential for SAGIMEC networks to make real-time decisions without knowledge of future information.
\par Several studies have also explored online approaches. For example, Zhou \emph{et al.}~\cite{zhou2022two} developed a two time-scale online approach for caching and task offloading by leveraging the Lyapunov optimization framework. Miao \emph{et al.}~\cite{10388042} proposed a deep deterministic policy gradient (DDPG)-based algorithm to optimize the computational resources allocation and UAV flight trajectory for UAV-assisted MEC. To minimize the average power consumption of the system with randomly arriving user tasks, Hoang \emph{et al.}~\cite{10102429} developed a Lyapunov-guided deep reinforcement learning (DRL) framework. Cai \emph{et al.} \cite{Cai2025} proposed an online approach based on graph DRL to optimize task offloading and resource allocation decisions.
\par The Lyapunov-based optimization framework and DRL represent two viable methodologies for developing online approaches. While DRL is a powerful technique for training agents to make real-time decisions, it necessitates a substantial amount of sample data to learn optimal strategies and incurs significant computational overhead.  Therefore, we employ the Lyapunov-based optimization framework to devise our online approach. \textbf{In a departure from existing research}, we propose a novel decentralized  decision-making method based on game theory within the Lyapunov-based optimization framework. This proposed approach demonstrates both low computational complexity and superior performance.
%
%
\section{System Model}
\label{sec:System Model and problem Formulation}
\begin{figure}[t]
    \centering
    \setlength{\abovecaptionskip}{2pt}%
    \setlength{\belowcaptionskip}{2pt}%
    \includegraphics[width =3.5in]{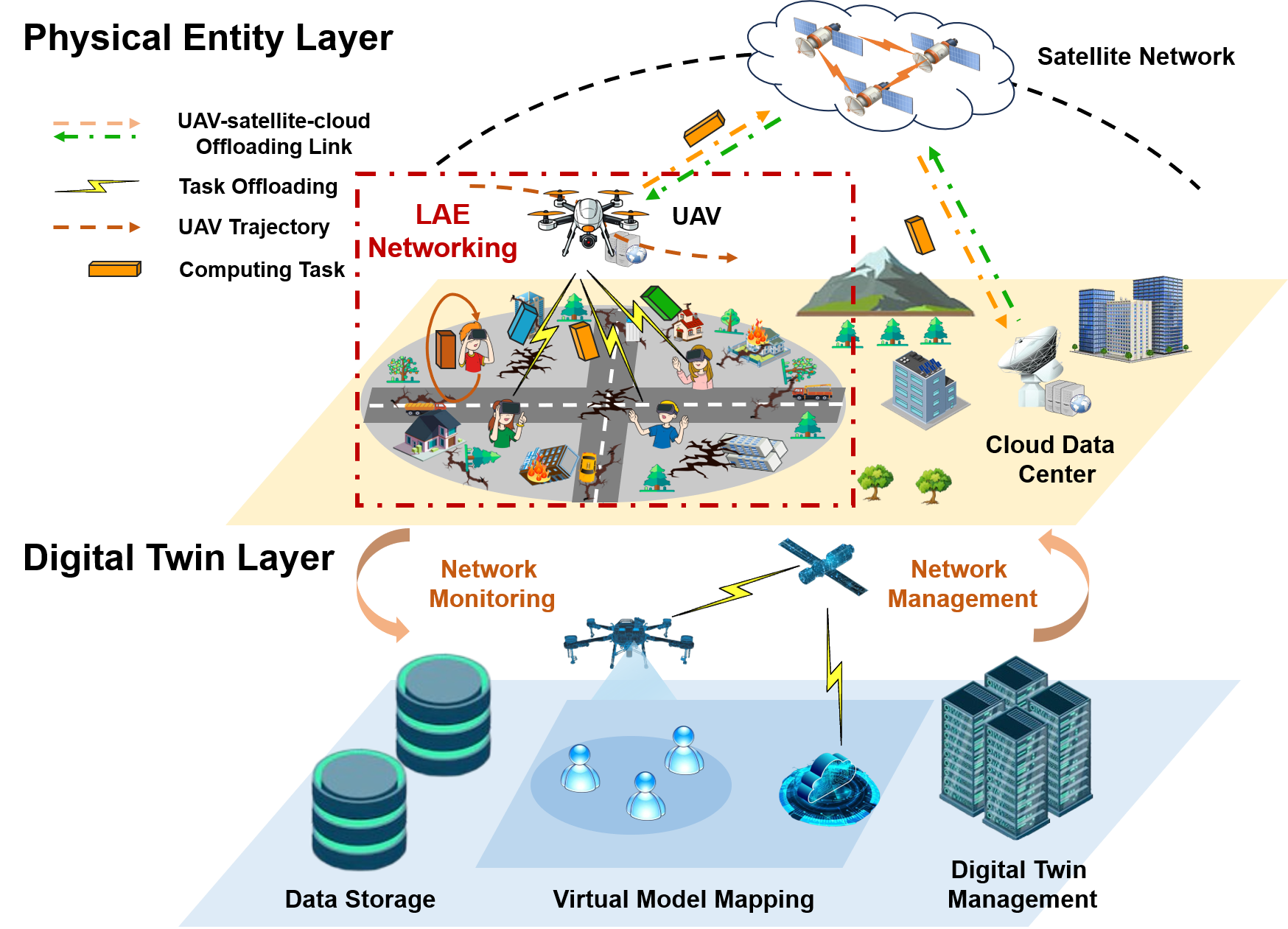}
    \caption{The proposed digital twin-assisted SAGIMEC architecture consists of a physical entity layer and a digital twin layer.} 
    \label{fig_gameModel}
\vspace{1em}
\end{figure}
\par As shown in Fig. \ref{fig_gameModel}, we propose a digital twin-driven SAGIMEC paradigm for the LAE, which consists of a physical entity layer and a digital twin layer, detailed as follows. 

\par \textbf{Physical Entity Layer.} In the spatial dimension, the physical entity layer is structured as a hierarchical architecture comprising a terrestrial layer, an aerial layer, and a space layer. \textit{At the terrestrial layer}, a set of ISDs $\mathcal{M}=\{1,\ldots,M\}$ are distributed in the considered area to carry out specific activities such as virtual/augmented reality applications, and generate corresponding tasks with time-varying computational demands. Moreover, a cloud center $c$ is deployed far away from the considered service area, which can provide robust cloud computing services. However, due to the destruction of ground infrastructure, e.g., during natural disasters, the ISDs are unable to access the cloud center via the terrestrial network. \textit{At the aerial layer}, a rotary-wing UAV $u$ equipped with computational and communication capabilities is employed in close proximity to the ISDs to provide flexible computing offloading services. \textit{At the space layer}, we consider a satellite constellation consisting of $N$ LEO satellites. This network provides seamless communication connectivity between the UAV and the cloud computing center, thereby making cloud computing services accessible. 

\par In the temporal dimension, the physical entity layer operates in a discrete time slot manner. Specifically, the system time is divided into $T$ equal time slots with $t\in \mathcal{T}=\{1,\ldots,T\}$, wherein each time slot has a duration of $\tau$~\cite{zhou2022two}. Furthermore, $\tau$ is chosen to be sufficiently small such that each time slot can be considered as quasi-static~\cite{Liu2022}.

\par \textbf{Digital Twin Layer.} The digital twin layer is the virtual representation of the physical entity layer, which is deployed in the cloud center to facilitate real-time monitoring and management of the physical network. Specifically, the digital twin layer consists of three key components, i.e., data storage, virtual model mapping, and digital twin management. First, the data storage is responsible for storing and collecting real-time information related to physical layer entities and network states. Second, the virtual model mapping involves creating virtual models of physical entities and simulating the interactions between these entities. Finally, the digital twin management is responsible for updating and maintaining the mapping between the virtual model and the physical network, while providing essential network control feedback to the physical network.

%
%
\subsection{Digital Twin Models}
\label{subsec:Basic Model}
\par The proposed system comprises three types of physical entities, i.e., ISD, UAV, and satellite. The cloud center creates corresponding digital twin models to record the real-time states of these physical entities, as detailed below.
\par \textbf{\textit{ISD Digital Twin Model.}} We consider that each ISD generates one computing task per time slot~\cite{JiangDXI23,ZhangHM23}. At time slot $t$, the digital twin model of ISD $m\in\mathcal{M}$ can be characterized as $\mathbf{St}_m(t)=\left(f_m^{\text{ISD}},E_m(t),\mathbf{\Phi}_m(t),\mathbf{q}_m\right)$, wherein $f_m^{\text{ISD}}$ and $E_m(t)$ denote the local computing capability and energy consumption of ISD $m$, respectively. Furthermore, the set $\mathbf{\Phi}_m(t)=\{D_m(t),\eta_m(t),T^{\text{max}}_m(t)\}$ represents the attributes of the computing task generated by ISD $m$, wherein $D_m(t)$ indicates the task data size (in bits), $\eta_m(t)$ denotes the computation density (in cycles/bit), and $T^{\text{max}}_m(t)$ is the deadline of the task (in s). Moreover, $\mathbf{q}_m=\left[x_m(t),y_m(t)\right]$ stands for the location coordinates of ISD $m$. 

\par \textbf{\textit{UAV Digital Twin Model.}} Similar to~\cite{li2023robust}, we consider that the UAV flies at a constant altitude $H$ to mitigate additional energy consumption associated with frequent altitude changes. Therefore, the digital twin model of the UAV $u$ can be characterized by $\mathbf{St}_{u}(t)=(\mathbf{q}_{u}^t,H,E_u(t),B_u,F_{u}^{\text{max}})$, wherein $\mathbf{q}_u^t = [x_u^t,y_u^t]$ and $H$ represent the horizontal coordinates and flight height, respectively. $E_u(t)$ signifies the energy consumption of the UAV. Moreover, $B_u$ and $F_{u}^{\text{max}}$ denote the available bandwidth resources and computing resources of the UAV, respectively.

\par \textbf{\textit{Satellite Digital Twin Model.}} The satellite digital twin model is developed to simulate the dynamic mobility characteristics of the satellite constellation. Specifically, due to the movement of satellites, the connectivity between the UAV and satellites varies over time, leading to a time-varying subset of accessible satellites $\mathcal{S}(t)$. Furthermore, due to the periodic nature of satellite movements, the concept of snapshots can be utilized to model the changes of the accessible subset~\cite{Zhang2024}. Specifically, every $\Delta$ consecutive time slots form a snapshot epoch, where the accessible subset remains constant within each snapshot epoch but varies across different snapshot epochs~\cite{song2022evolutionary}.
%
%
\subsection{Task Computing Model}
\label{subsec:Task Computing Model}
\par The task $\mathbf{\Phi}_m(t)$ generated by ISD $m$ can be carried out locally on the ISD (referred to as local computing), offloaded to UAV $u$ for execution (referred to as UAV-assisted computing), or offloaded to cloud $c$ for execution (referred to as cloud-assisted computing), which is decided by the offloading decision of the ISD. Therefore, we define a variable $a_m(t)\in \{l,u,c\}$ to indicate the offloading decision of ISD $m$ at time slot $t$, wherein $a_m(t)=l$ indicates the local computing, $a_m(t)=u$ represents the UAV-assisted computing, and $a_m(t)=c$ signifies the cloud-assisted computing. Furthermore, both local computing and computation offloading generally involve overheads in terms of latency and energy, which are explained in detail below.

\subsubsection{Local Computing} If task $\mathbf{\Phi}_m(t)$ is executed locally by ISD $m$ at time slot $t$, the ISD utilizes its local computational resources for task execution. 

\par \textbf{Task Completion Latency.} The task completion latency for local computing is computed as 
\begin{sequation}
    T_m^{\text{LC}}(t)=\frac{\eta_m(t)D_m(t)}{f_m^{\text{ISD}}},
\end{sequation}
\noindent where $f_m^{\text{ISD}}$ represents the computing capability of ISD $m$.

\par \textbf{ISD Energy Consumption.} The energy consumption of ISD $m$ to execute task $\mathbf{\Phi}_m(t)$ is calculated as 
\begin{sequation}
    E_m^{\text{LC}}(t)=k(f_m^{\text{ISD}})^{3}T_m^{\text{LC}}(t),
\end{sequation}

\noindent wherein $k$ is the effective switched capacitance coefficient dependent on the CPU chip architecture.~\cite{PanPWZW21}.

\subsubsection{UAV-assisted Computing} If task $\mathbf{\Phi}_m(t)$ is offloaded to UAV $u$ for execution at time slot $t$, the UAV establishes a communication connection with ISD $m$ to receive the task, allocates computing resources for task execution, and transmits the processed results to the ISD. Note that we ignore the latency and energy consumption associated with the result feedback due to the short-distance transmission and the small data size of the results~\cite{Zhang2024}. 

\par \textbf{ISD-UAV Communication.} The widely used orthogonal frequency-division multiple access (OFDMA) is employed to simultaneously serve multiple ISDs by the UAV~\cite{WeiCSNYZS21}. According to Shannon's formula, the data transmission rate from ISD $m$ to UAV $u$ at time slot $t$ is expressed as 
\begin{sequation}
    R_{u,m}(t)=w_{u,m}^t B_{u}\log_2\big(1+P_mg_{u,m}(t)/\varpi_0\big),
\end{sequation}

\noindent wherein $w_{u,m}^t$ represents the resource allocation coefficient of ISD $m$, $B_u$ denotes the bandwidth resources available to UAV $u$, $P_{m}$ indicates the transmission power of ISD $m$, $g_{u,m}(t)$ means the channel gain, and $\varpi_0$ is the noise power.

\par Considering that the ISD-UAV link may experience obstruction from environmental obstacles, we employ the probabilistic line-of-sight (LoS) channel model to describe the communication condition \cite{Zheng2024,communicationmodel2}. Specifically, the LoS probability $\rho_{u,m}^{\text{LoS}}(t)$ from UAV $u$ to ISD $m$ is defined as~\cite{Liu2024b} 
\begin{sequation}
    \rho_{u,m}^{\text{LoS}}(t)=\frac{1}{1+c_1 \exp(-c_2(\frac{180}{\pi} \arcsin{\frac{H}{d_{u,m}(t)}} - c_1))},
\end{sequation}

\noindent where $c_1$ and $c_2$ are the constants depending on the environment, and $d_{u,m}(t)$ means the straight-line distance between UAV $u$ and ISD $m$~\cite{sun2023uav}. Similar to~\cite{chen2023energy}, the path loss between UAV $u$ and ISD $m$ is evaluated as
\begin{sequation}
\begin{aligned}
    L_{u,m}(t)= & 20 \log _{10}\big(4 \pi f_u d_{u,m}(t)/v_c\big)+\rho_{u,m}^{\text{LoS}}(t) \eta^{\text{LoS}} \\
    & +\big(1-\rho_{u,m}^{\text{LoS}}(t)\big) \eta^{\text{nLoS}},
\end{aligned}
\end{sequation}

\noindent where $v_c$ denotes the speed of light, and $f_u$ represents the carrier frequency. $\eta^{\text{nLoS}}$ and $\eta^{\text{LoS}}$ denote the extra losses for nLoS and LoS links, respectively. Generally, the channel gain $g_{u,m}(t)$ is estimated as $g_{u,m}(t)=10^{-L_{u,m}/10}$.

\par \textbf{Task Completion Latency.} The task completion latency mainly consists of the transmission and execution latency, which is calculated as 
\begin{sequation}
    \label{eq.ec-delay}
    T_{m}^{\text{UC}}(t) = \frac{D_m(t)}{R_{u,m}(t)}+\frac{\eta_m(t) D_m(t)}{F_{u,m}^t},
\end{sequation}

\noindent where $F_{u,m}^t$ stands for the computational resources assigned by UAV $u$ to ISD $m$ at time slot $t$.

\par \textbf{ISD Energy Consumption.} The transmission energy consumption of ISD for UAV-assisted computing is computed as
\begin{sequation}
    \label{eq.trans-energy}
    E_{m}^{\text{UC}}(t) = \frac{P_mD_m(t)}{R_{u,m}(t)}.
\end{sequation}

\par \textbf{UAV Energy Consumption.} The computational energy consumption of UAV for UAV-assisted computing is given as
\begin{sequation}
    \label{eq.uav-comp-energy}
    E_{u,m}^{\text{comp}}(t) = \varpi\eta_m(t)D_m(t),
\end{sequation}

\noindent where $\varpi$ represents the energy consumption per unit CPU cycle of the UAV~\cite{JiangDXI23}. 

\par Therefore, at time slot $t$, the total computational energy consumption of the UAV is expressed as
\begin{sequation}
    \label{eq.comp-energy}
    E_u^{\text{comp}}(t) = \sum_{m\in \mathcal{M}}I_{\{a_m(t)=u\}}E_{u,m}^{\text{comp}}(t),
\end{sequation}

\noindent where $I_{\{X\}}$ represents an indicator function, which equals 1 if $X$ is true, and 0 otherwise. 

\subsubsection{Cloud-assisted Computing} If task $\mathbf{\Phi}_m(t)$ is offloaded to cloud $c$ for execution at time slot $t$, the task is first offloaded to the UAV. Then, the UAV further offloads the task to the cloud via the satellite network relay and receives the result feedback from the cloud. However, unlike ISD-UAV communication, the UAV-satellite-cloud channel is severely affected by long-distance transmission, unpredictable weather conditions, high-speed satellite mobility, and dynamic satellite network topology~\cite{niephaus2016qos}. Therefore, measuring the round-trip latency of tasks for UAV-satellite-cloud transmission accurately is challenging. Moreover, there are multiple accessible satellites $\mathcal{S}(t)$ per time slot. The selection of different satellite relays may result in varying round-trip latency. To this end, we introduce variables $b_u^t\in\mathcal{S}(t)$ and $L_s(t)$ to represent the satellite selection decision and the unit data round-trip latency for satellite $s\in \mathcal{S}(t)$, respectively. Specifically, the variable $L_s(t)$ ($L_s^{\text{min}}\leq L_s(t)\leq L_s^{\text{max}}$) stands for a random variable with an unknown mean, assumed to be independently and identically distributed across time slots~\cite{gao2021energy}. 

\par \textbf{Task Completion Latency.} The task completion latency mainly consists of the transmission latency from ISD $m$ to the UAV, the round-trip latency from the UAV to the cloud, and the cloud computing latency. Considering that cloud computing has sufficient computational resources, we ignore the corresponding computing delay. Therefore, the task completion latency is given as
\begin{sequation}
    \label{eq.delay_cc}
    T_m^{\text{CC}}(t) = \frac{D_m}{R_{u,m}(t)}+\sum_{s\in \mathcal{S}(t)} I_{\{b_u^t=s\}}D_m(t)L_s(t).
\end{sequation}

\par \textbf{ISD Energy Consumption.} For cloud-assisted computing, the ISD incurs transmission energy consumption, i.e.,
\begin{sequation}
    \label{eq.UD-energy_cc}
    E_m^{\text{CC}}(t) = \frac{P_mD_m(t)}{R_{u,m}(t)}.
\end{sequation}

\par \textbf{UAV Energy Consumption.} Similarly, the UAV also incurs transmission energy consumption for cloud-assisted computing, which is calculated as
\begin{sequation}
    \label{eq.UAV-energy_cc}
    E_{u,m}^{\text{trans}} = \sum_{s\in \mathcal{S}(t)} I_{\{b_u^t=s\}}D_m(t)Z_s(t),
\end{sequation}

\noindent where $Z_s(t)$ represents the energy consumption of transmitting per bit of data between the UAV and satellite $s$ at time slot $t$. Therefore, the total transmission energy consumption of the UAV is obtained as
\begin{sequation}
    \label{eq.total-UAV-energy_trans}
    E_u^{\text{trans}}(t) = \sum_{m\in \mathcal{M}}I_{\{a_m^t=c\}}E_{u,m}^{\text{trans}}(t).
\end{sequation}
%
%
\subsection{Performance Metrics}
\label{subsec:Performance Metrics}
\subsubsection{QoS of ISD} In this work, considering the latency sensitivity of computing tasks and the limited energy resources of ISDs, we evaluate the QoS of each ISD in each time slot by jointly considering the task completion latency cost and the ISD energy consumption cost. Specifically, the task completion latency of ISD $m$ at time slot $t$ is represented as
\begin{sequation}
    \label{eq.task-delay}
    T_m(t) = I_{\{a_m^t=l\}}T_m^{\text{LC}}+I_{\{a_m^t=u\}}T_m^{\text{UC}}+I_{\{a_m^t=c\}}T_m^{\text{CC}}.
\end{sequation}

\noindent Accordingly, at time slot $t$, the energy consumption of ISD $m$ is described as
\begin{sequation}
    \label{eq.GU-energy}
    E_m(t) = I_{\{a_m^t=l\}}E_m^{\text{LC}}+I_{\{a_m^t=u\}}E_m^{\text{UC}}+I_{\{a_m^t=c\}}E_m^{\text{CC}}.
\end{sequation}

\noindent Similar to~\cite{Chen2022,Ding2022}, at time slot $t$, the cost of ISD $m$ is formulated as
\begin{sequation}
    \label{eq.GU-cost}
    C_m(t)=\gamma^{\text{T}} T_m(t) + \gamma^{\text{E}}E_m(t),
\end{sequation}

\noindent where $\gamma^{\text{T}}$ and $\gamma^{\text{E}}$ (with $\gamma^{\text{T}}+\gamma^{\text{E}}=1$) denote the weight coefficients of latency and energy consumption, respectively. Clearly, minimizing the cost of ISDs is equivalent to maximizing the QoS of ISDs.

\subsubsection{UAV Energy Consumption} At time slot $t$, the total energy consumption of the UAV includes transmission energy consumption, computation energy consumption, and propulsion energy consumption. Similar to~\cite{pan2023joint,Zhang2024a,Huang2024a}, the propulsion power for a rotary-wing UAV with speed $v_u$ is given as
\begin{sequation}
\label{eq.prop-energy}
P_u(v_u)=\underbrace{C_1(1+\frac{3 v_u^2}{U_{\text{p}}^2})}_{\text {blade profile }}+\underbrace{C_2 \sqrt{\sqrt{C_3+\frac{v_u^4}{4}}-\frac{v_u^2}{2}}}_{\text{induced }}+\underbrace{C_4 v_u^3}_{\text {parasite}},
\end{sequation}

\noindent where $U_{\text{p}}$ refers to the rotor's tip speed, $C1$, $C2$, $C3$, and $C4$ are constants defined in~\cite{Yang2022}. Therefore, at time slot $t$, the total energy consumption of the UAV is calculated as
\begin{sequation}
    \label{eq.UAV-energy}
    E_u(t) = E_u^{\text{comp}}(t) + E_u^{\text{trans}}(t) +E_u^{\text{prop}}(t),
\end{sequation}

\noindent where $E_u^{\text{prop}}(t)=P_u(v_u(t))\tau$ denotes the propulsion energy consumption at time slot $t$. To guarantee service time, the UAV energy consumption constraint is defined as
\begin{sequation}
    \label{eq.eng-cons}
    \lim _{T \rightarrow+\infty} \frac{1}{T} \sum_{t=1}^{T} \mathbb{E}\left\{E_u(t)\right\} \leq \bar{E}_u,
\end{sequation}

\noindent where $\bar{E}_u$ is the energy budget of the UAV per time slot.

%
%
\section{Problem Formulation}
\label{sec:problem Formulation}
\par In this section, we formally formulate the optimization problem. Furthermore, we analyze the challenges of solving this problem and then explain the motivation behind the proposed approach.
\subsection{Optimization Problem}
\par To minimize the average costs of all ISDs over time (i.e., time-averaged ISD cost), this work jointly optimizes the computation offloading $\mathbf{A}=\{\mathcal{A}^t|\mathcal{A}^t=\{a_m^t\}_{m\in\mathcal{M}}\}_{t\in\mathcal{T}}$, satellite selection $\mathbf{B}=\{b_u^t\}_{t\in\mathcal{T}}$, computing resource allocation $\mathbf{F}=\{\mathcal{F}^t|\mathcal{F}^t=\{F_{u,m}^t\}_{m\in\mathcal{M}}\}_{t\in\mathcal{T}}$, communication resource allocation $\mathbf{W}=\{\mathcal{W}^t|\mathcal{W}^t=\{w_{u,m}^t\}_{m\in \mathcal{M}}\}_{t\in \mathcal{T}}$, and trajectory control $\mathbf{Q}=\{\mathbf{q}_u^t\}_{t\in \mathcal{T}}$. Mathematically, we can formulate this optimization problem as follows:

\vspace{-0.8em}
\begin{small}
 \begin{align}
    \textbf{P}: \quad &\underset{\mathbf{A}, \mathbf{B}, \mathbf{F}, \mathbf{W}, \mathbf{Q}}{\text{min}} \frac{1}{T}\sum_{t=1}^{T}\sum_{m=1}^{M}C_m(t) \label{P}\\
    \text{s.t.}\ \ 
    &\lim _{T \rightarrow+\infty} \frac{1}{T} \sum_{t=1}^{T} \mathbb{E}\left\{E_u(t)\right\} \leq \bar{E}_u,  \tag{\ref{P}{\text{a}}} \label{Pa}\\
    &a_m^t\in \{l,u,c\}, \forall m\in \mathcal{M}, t\in \mathcal{T}, \tag{\ref{P}{\text{b}}} \label{Pb}\\
    &I_{\{a_m^t=u\}}T^{\text{UC}}_m(t)+I_{\{a_m^t=c\}}T^{\text{CC}}_m(t)\leq T_m^{\text{max}}, \forall m\in \mathcal{M}, t\in \mathcal{T}, \tag{\ref{P}{\text{c}}} \label{Pc}\\
    &b_u^t\in \mathcal{S}(t), \forall t\in \mathcal{T},\tag{\ref{P}{\text{d}}} \label{Pd}\\
    &0< F_{u,m}^t \leq F_u^{\text{max}}, \forall m\in \mathcal{M}, t\in \mathcal{T}, \tag{\ref{P}{\text{e}}} \label{Pe}\\
    &\sum_{m=1}^M I_{\{a_m^t=u\}}F_{u,m}^t\leq F_u^{\text{max}}, \forall t\in \mathcal{T}, \tag{\ref{P}{\text{f}}} \label{Pf}\\
    &0< w_{u,m}^t \leq 1, \forall m\in \mathcal{M}, \forall t\in \mathcal{T}, \tag{\ref{P}{\text{g}}} \label{Pg}\\
    &\sum_{m=1}^M I_{\{a_m^t\in \{u,c\}\}}w_{u,m}^t\leq 1, \forall t\in \mathcal{T}, \tag{\ref{P}{\text{h}}} \label{Ph}\\
    &\mathbf{q}_u^{t=1}=\mathbf{q}^{\text{ini}}, \tag{\ref{P}{\text{i}}} \label{Pi}\\
    &||\mathbf{q}_u^{t+1}-\mathbf{q}_u^t|| \leq v_u^{\text{max}} \tau, \forall t\in \mathcal{T},\tag{\ref{P}{\text{j}}} \label{Pj}  
\end{align}
\end{small}

\noindent where $\mathbf{q}^{\text{ini}}$ represents the initial position of the UAV. Constraint (\ref{Pa}) is the long-term UAV energy consumption constraint. Constraints (\ref{Pb}) and (\ref{Pc}) pertain to computation offloading decisions. Constraint (\ref{Pd}) states that the UAV can only choose one satellite for communication. Constraints (\ref{Pe}) and (\ref{Pf}) regulate the allocation of computing resources, while constraints (\ref{Pg}) and (\ref{Ph}) govern the communication resource allocation. Finally, constraints (\ref{Pi}) and (\ref{Pj}) concern the UAV trajectory control.

\subsection{Problem Analysis}
\subsubsection{Challenges of Problem Solving} There are three main challenges in optimally solving the problem $\textbf{P}$. \textit{i) Future-dependent.} Obtaining the optimal solution of the problem requires complete future information, e.g., computing demands of all ISDs across all time slots. However, acquiring the future information is very challenging in the considered time-varying scenario. \textit{ii) Uncertainty.} Since the problem involves an uncertain network parameter, i.e., the task round-trip latency between the UAV and the cloud, it is challenging to make relevant decisions under uncertain network dynamics. \textit{iii) Non-convex and NP-hard.} The problem involves both continuous variables (i.e., resource allocation $\{\mathbf{F},\mathbf{W}\}$ and UAV trajectory control $\mathbf{Q}$) and discrete variables (i.e., computation offloading decision $\mathbf{A}$ and satellite selection decision $\mathbf{B}$). Therefore, it is a mixed-integer non-linear programming (MINLP) problem, which can be proven to be non-convex and NP-hard~\cite{boyd2004convex,belotti2013mixed}.

\subsubsection{Motivation of Proposing ODOA} Given the aforementioned challenges, while deep reinforcement learning (DRL) is often considered a viable approach, it may not be suitable for the considered system for the following reasons. First, DRL typically requires a large amount of training samples to learn effective strategies. However, it is difficult to obtain real sample data due to the time-varying and uncertain nature of the system. Furthermore, the formulated optimization problem involves numerous constraints, a high-dimensional decision space, and heterogeneous decision variables. Using DRL to solve this problem faces convergence challenges. Moreover, DRL is highly sensitive to environmental changes and lacks interpretability, which makes it difficult to meet the system's requirements for scalability and adaptability. 
\par Therefore, we proposed an efficient approach, i.e., ODOA, by leveraging Lyapunov optimization, online learning and game theory to transform and decompose the original problem based on the specific characteristics of the considered system and the optimization problem. Compared to DRL, the proposed approach does not require sample data and explicit knowledge of the system dynamics.  Additionally, the proposed approach offers interpretability and broader adaptability. Finally, the proposed approach has low computational complexity, making it suitable for real-time decision-making. Fig.~\ref{fig_algrithm_framework} shows the framework of the proposed ODOA. The details are elaborated in the following sections.
\begin{figure*}[!hbt]
	\centering
	\setlength{\abovecaptionskip}{0pt}%
	\setlength{\belowcaptionskip}{0pt}%
	\includegraphics[width =6.8in]{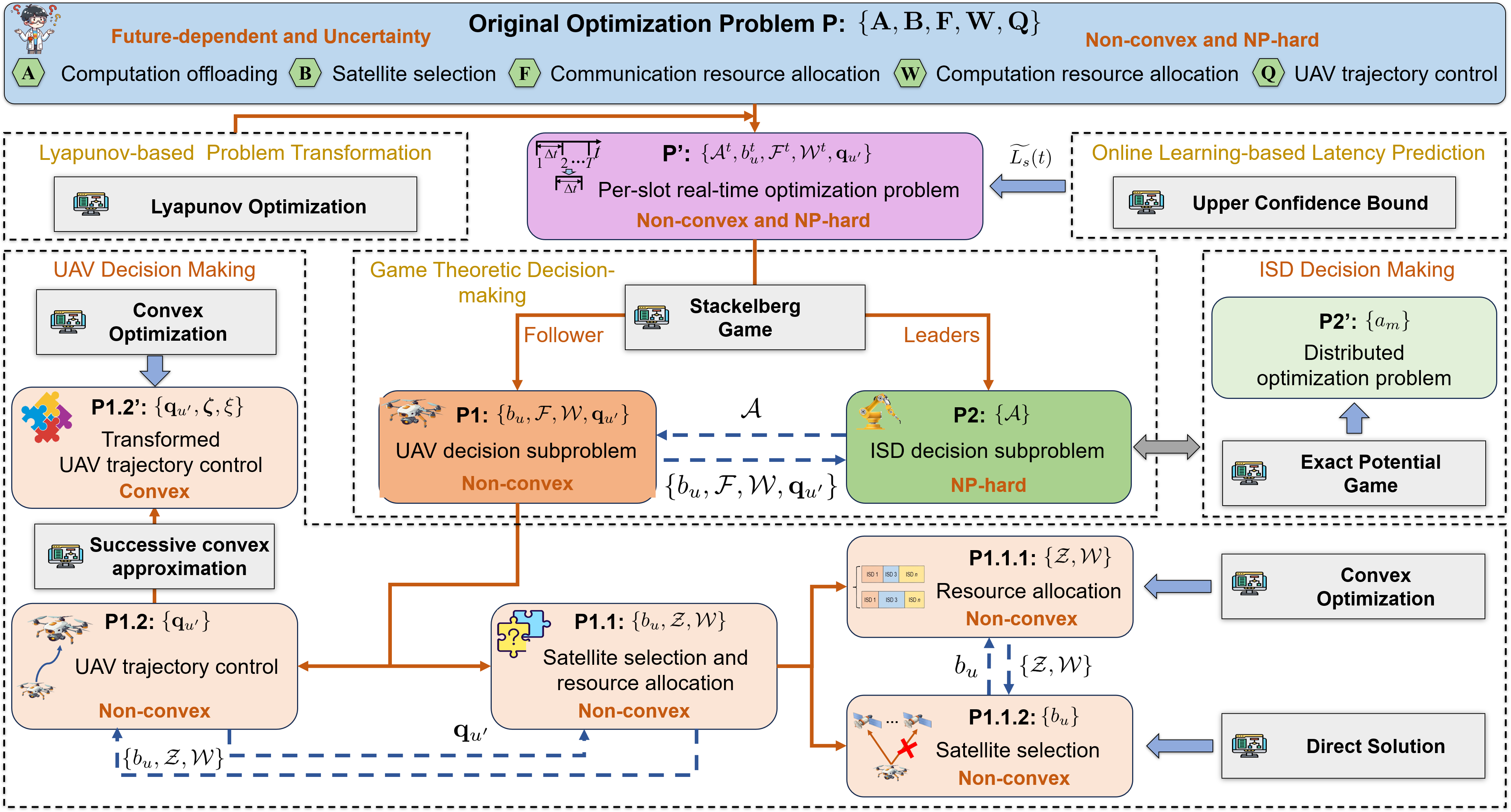}
	\caption{The framework of ODOA.} 
	\label{fig_algrithm_framework}
	\vspace{-1.5em}
\end{figure*}
%
%
\section{Lyapunov-Based Problem Transformation}
\label{sec:Lyapunov-Based problem Transformation}
\par Since problem $\textbf{P}$ is future-dependent, an online approach is necessary for real-time decision making without foreseeing the future. The Lyapunov optimization is a commonly used framework for online approach design, as it does not require direct knowledge of the network dynamics while providing guaranteed performance~\cite{Yang2022}. Therefore, we first utilize the Lyapunov optimization to transform problem $\textbf{P}$ into a per-slot real-time optimization problem.
\par Specifically, to meet the long-term UAV energy constraint (\ref{Pa}), the UAV digital twin model first defines two virtual energy queues $Q_{u1}(t)$ and $Q_{u2}(t)$, which represent the total transmission and computing energy queue, as well as the propulsion energy queue, respectively. Moreover, these queues are set as zero at the initial time slot, i.e., $Q_{u1}(1)=0$ and $Q_{u2}(1)=0$. Based on real-time monitoring of the UAV energy consumption, the virtual energy queues are updated as
\begin{sequation}
    \label{eq.queue}
    \begin{cases}
        Q_{u1}(t+1)=\max \left\{Q_{u1}(t)+E_{u1}(t)-\bar{E_{u1}}, 0\right\}, \\
        Q_{u2}(t+1)=\max \left\{Q_{u2}(t)+E_{u2}(t)-\bar{E_{u2}}, 0\right\},
    \end{cases}
\end{sequation}

\noindent where $E_{u1}(t)=E_u^{\text{comp}}(t)+E_u^{\text{trans}}(t)$ and $E_{u2}(t)=E_u^{\text{prop}}(t)$. Furthermore, $\bar{E_{u1}}$ and $\bar{E_{u2}}$ (with $\bar{E_{u1}}+\bar{E_{u2}}=\bar{E_u}$) represent the computing and transmission energy budgets per slot, as well as the propulsion energy budgets per slot, respectively. 

\par Secondly, the Lyapunov function, which represents a scalar measurement of the queue backlogs, is defined as 
\begin{sequation}
    \label{eq.ly-func}
    L(\boldsymbol{\Theta}(t)) = ((Q_{u1}(t))^2+(Q_{u2}(t))^2)/{2},
\end{sequation}

\noindent where $\boldsymbol{\Theta}(t)=[Q_{u1}(t),Q_{u2}(t)]$ is the vector of current queue backlogs. Thirdly, the one-slot conditional Lyapunov drift can be defined as
\begin{sequation}
    \label{eq.cond-ly-drift}
    \Delta L(\boldsymbol{\Theta}(t)) \triangleq \mathbb{E}\{L(\boldsymbol{\Theta}(t+1))-L(\boldsymbol{\Theta}(t)) \mid \boldsymbol{\Theta}(t)\}.
\end{sequation}

\noindent Finally, the \textit{drift-plus-penalty} can be given as
\begin{sequation}
    \label{eq.drift-plus-penalty}
   D(\boldsymbol{\Theta}(t)) =\Delta L(\boldsymbol{\Theta}(t))+V \mathbb{E}\left\{C_s(t)\mid \mathbf{\Theta}(t)\right\},
\end{sequation}

\noindent where $C_s(t)=\sum_{m=1}^{M}C_m(t)$ is the total cost of all ISDs at time slot $t$, and $V$ is a control parameter to trade off the total cost and the queue stability. Next, we provide an upper bound on the \textit{drift-plus-penalty}, as stated in Theorem~\ref{the:drift-plus-penalty}.

%
%
\begin{theorem}
\label{the:drift-plus-penalty}
For all $t$ and all possible queue backlogs $\boldsymbol{\Theta}(t)$, the drift-plus-penalty is upper bounded as
\begin{sequation}
    \label{eq.theorem1}
    \begin{aligned}
    D(\mathbf{\Theta}(t)) \leq & W + {Q}_{u1}(t)(E_{u1}(t)-\bar{E_{u1}})\\
    &+{Q}_{u2}(t)(E_{u2}(t)-\bar{E_{u2}})+V\times C_s(t),
    \end{aligned}
\end{sequation}

\noindent where $W=\frac{1}{2}\max \left\{\left(\bar{E_{u1}}\right)^2,\left(E_{u1}^{\max}-\bar{E_{u1}}\right)^2\right\}+\frac{1}{2} \max \left\{\left(\bar{E_{u2}}\right)^2,\left(E_{u2}^{\max}-\bar{E_{u2}}\right)^2\right\}$ is a finite constant.
\end{theorem}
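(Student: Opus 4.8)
The plan is to derive the bound via the standard Lyapunov drift argument, exploiting the quadratic form of $L(\boldsymbol{\Theta}(t))$ together with the nonexpansiveness of the $\max\{\cdot,0\}$ projection in the queue updates (\ref{eq.queue}). First I would treat each virtual queue separately. For a generic queue obeying $Q(t+1)=\max\{Q(t)+a-b,0\}$, I would observe that squaring and discarding the projection can only increase the value, i.e., $(Q(t+1))^2\leq(Q(t)+a-b)^2$: whenever $Q(t)+a-b<0$ the left side is zero while the right side is nonnegative, and otherwise equality holds. Expanding the right side gives the pointwise inequality $(Q(t+1))^2\leq(Q(t))^2+(a-b)^2+2Q(t)(a-b)$.

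Next I would apply this to $Q_{u1}(t)$ (with $a=E_{u1}(t)$, $b=\bar{E_{u1}}$) and to $Q_{u2}(t)$ (with $a=E_{u2}(t)$, $b=\bar{E_{u2}}$), divide by two, and sum. Using $L(\boldsymbol{\Theta}(t))=((Q_{u1}(t))^2+(Q_{u2}(t))^2)/2$, this yields a pointwise bound on $L(\boldsymbol{\Theta}(t+1))-L(\boldsymbol{\Theta}(t))$ made up of the two linear drift terms $Q_{ui}(t)(E_{ui}(t)-\bar{E_{ui}})$ plus the two quadratic terms $\tfrac{1}{2}(E_{ui}(t)-\bar{E_{ui}})^2$.

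The crux is then to absorb the quadratic terms into the finite constant $W$. Since the UAV's per-slot energy is physically bounded --- the propulsion term in $P_u(v_u)$ is bounded because the speed is capped by $v_u^{\max}$ through constraint (\ref{Pj}), and the combined transmission/computation energy $E_{u1}(t)$ is bounded because the task data sizes, computation densities, and allocable resources are all finite --- there exist finite maxima $E_{u1}^{\max}$ and $E_{u2}^{\max}$ with $0\leq E_{ui}(t)\leq E_{ui}^{\max}$. Hence $E_{ui}(t)-\bar{E_{ui}}$ lies in $[-\bar{E_{ui}},\,E_{ui}^{\max}-\bar{E_{ui}}]$, so its square is at most $\max\{(\bar{E_{ui}})^2,(E_{ui}^{\max}-\bar{E_{ui}})^2\}$; summing the two halved bounds reproduces exactly the stated $W$. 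I expect this boundedness argument (justifying $E_{ui}^{\max}<\infty$) to be the only genuinely non-mechanical step, as the remainder is elementary algebra.

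Finally I would take the conditional expectation given $\boldsymbol{\Theta}(t)$ of both sides, noting that $Q_{u1}(t)$ and $Q_{u2}(t)$ are deterministic once $\boldsymbol{\Theta}(t)$ is fixed, and then add the penalty $V\,\mathbb{E}\{C_s(t)\mid\boldsymbol{\Theta}(t)\}$ to both sides. By the definitions of the conditional drift (\ref{eq.cond-ly-drift}) and the drift-plus-penalty (\ref{eq.drift-plus-penalty}), the left side collapses to $D(\boldsymbol{\Theta}(t))$ while the right side coincides with the claimed upper bound, completing the argument.
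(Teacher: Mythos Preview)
Your proposal is correct and follows essentially the same route as the paper's own proof: square the projected queue update, expand, bound the quadratic increment by the constant $W$, sum over the two queues, and then append the penalty term. If anything, you are slightly more explicit than the paper in justifying why $E_{u1}^{\max}$ and $E_{u2}^{\max}$ are finite and in handling the conditional expectation, but these are elaborations of the same argument rather than a different approach.
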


%
%
\begin{proof}
From the inequality $(max\{a+b-c,0\})^2 \leq (a+b-c)^2\ \text{for} \ \forall a,b,c \geq 0$, we can know 
\begin{equation}
    \label{eq.a}
    Q_{u1}(t+1)^2 \leq (Q_{u1}(t)+E_{u1}(t)-\bar{E_{u1}})^2,
\end{equation}
\noindent Then, the following inequality holds,
\begin{sequation}
    \label{eq.temp1}
    \begin{aligned}
    (Q_{u1}(t+1)^2-Q_{u1}(t)^2)/{2} &\leq (E_{u1}(t)-\bar{E_{u1}})^2/{2}\\
    &+ Q_{u1}(t)(E_{u1}(t)-\bar{E_{u1}}),
    \end{aligned}
\end{sequation}

\noindent which also applies to queue $Q_{u2}(t)$. Therefore, we can obtain
\begin{sequation}
\label{eq.temp2}
\begin{aligned}
&\Delta L\left(\boldsymbol{\Theta}(t)\right) \leq \mathbb{E}\left\{\frac{\left(E_{u1}(t)-\bar{E_{u1}}\right)^2+(E_{u2}(t)-\bar{E_{u2}})^2}{2} \mid \boldsymbol{\Theta}(t)\right\}\\
& + \mathbb{E}\left\{Q_{u1}(t)(E_{u1}(t)-\bar{E_{u1}})+Q_{u2}(t)(E_{u2}(t)-\bar{E_{u2}}) \mid \boldsymbol{\Theta}(t)\right\} \\
&\leq W+Q_{u1}(t) \left(E_{u1}(t)-\bar{E_{u1}}\right)+Q_{u2}(t) \left(E_{u2}(t)-\bar{E_{u2}}\right),
\end{aligned}
\end{sequation}

\noindent where $E_{u1}^{\max}$ and $E_{u2}^{\max}$ are the upper bounds of computing and communication energy consumption, and propulsion energy consumption, respectively. Substituting \eqref{eq.temp2} into \eqref{eq.drift-plus-penalty}, we can prove the theorem. 
\end{proof}

\par According to the Lyapunov optimization framework, we minimize the right-hand side of inequality (\ref{eq.theorem1}). Therefore, the problem $\textbf{P}$ is converted into the problem $\textbf{P}^{'}$ that relies solely on current information to make real-time decisions, which is presented as follows:

\vspace{-0.8em}
\begin{small}
\begin{align}
\textbf{P}^{\prime}:&\underset{\mathcal{A}^t,b_u^t,\mathcal{F}^t,\mathcal{W}^t,\mathbf{q}_{u^\prime}}{\text{min}}Q_{u1}(t)E_{u1}(t)+Q_{u2}(t)E_{u2}(t)+V\sum_{m=1}^{M} C_m(t) \label{P_temp} \\
\text{s.t.} \ &(\ref{Pb})-(\ref{Pi}), \notag
\end{align}
\end{small}

\noindent where $\mathbf{q}_{u^\prime}=\mathbf{q}_{u}^{t+1}$ represents the UAV position at time slot $t+1$, and $V$ is control parameter. Note that although solving problem $\textbf{P}^{\prime}$ does not require future information, the problem still involves unknown network parameters, and it is an MINLP problem. Therefore, solving $\textbf{P}^{\prime}$ still remains challenging. To this end, in the following section, we design efficient algorithms to solve the problem.

%
%
\section{Algorithm Design}
\label{sec:Algorithm Design}

\par In this section, efficient algorithms are proposed to solve the problem $\textbf{P}^{\prime}$. Specifically, considering the uncertain task round-trip latency in $\textbf{P}^{\prime}$, we present an online learning-based latency prediction algorithm. Since $\textbf{P}^{\prime}$ is an MINLP problem, we further propose a game theoretic decision-making algorithm.

\subsection{Online Learning-based Latency Prediction Method}
\label{subsec:Online Learning-based Latency Prediction Algorithms}

\par Since the unit data round-trip latency $L_s(t)$ in problem $\textbf{P}^{\prime}$ is not precisely known before decision-making, online learning should be incorporated into the decision-making process to implicitly evaluate the statistics of the unit data round-trip latency based on network feedback. Specifically, if there are tasks requested for cloud computing services in each time slot, the UAV needs to choose a satellite $s$ from the available satellite set $\mathcal{S}(t)$ as a relay to transmit the tasks to the cloud. Then, the task processing results are relayed back to the UAV, and the corresponding unit data round-trip latency for satellite $s$ can be obtained. With the assistance of the digital twin, the information of the unit data round-trip delay can be stored in the digital twin in real time. By leveraging the accumulated historical information, the digital twin can predict the round-trip delay of different satellites as relays, thus providing insights for making high-quality offloading decisions. 
\par More specifically, the latency prediction can be modeled as a multi-armed bandit (MAB) problem~\cite{slivkins2019introduction} in the digital twin, wherein the UAV is considered as an agent and the satellite nodes in the available satellite set are treated as arms. However, a key issue for solving the MAB problem is the trade-off between exploitation and exploration. Specifically, when the UAV selects a satellite node, it can either exploit satellites with empirically lower round-trip latency to obtain better short-term benefits, or explore less frequently selected satellites to acquire new knowledge about their latency. Inspired by~\cite{li2019combinatorial}, we utilize the upper confidence bound (UCB) method to balance the trade-off between exploitation and exploration. Specifically, the prediction model for the unit data round-trip latency is presented as follows:
\begin{sequation}
\label{eq.latency-prediction}
\widetilde{L_s}(t)=
    \begin{cases}
        L_s^{\text{min}}, \text{if}\ t=1\ \text{or}\ h_s(t-1)=0,\\
        \max \left\{\bar{L_s}(t-1)-\omega_0 \sqrt{\frac{3 \log (\Delta_s(t))}{2h_s(t-1)}}, L_s^{\text{min}}\right\},\text{otherwise},
    \end{cases}
\end{sequation}

\noindent where $\omega_0 = L_s^{\text{max}}-L_s^{\text{min}}$, $\Delta_s(t)=\sum_{k=1}^tI_{\{s\in \mathcal{S}(k)\}}$ represents the number of time slots during which satellite $s$ is an accessible satellite until time slot $t$, $h_s(t-1)=\sum_{k=1}^{t-1}I_{\{b_u^t=s\}}$ denotes the number of time slots during which satellite $s$ was selected before time slot $t$, and $\bar{L_s}(t-1)=\sum_{k=1}^{t-1}I_{\{b_u^t=s\}}L_s(k)/h_s(t-1)$ is the observed average of the unit data round-trip latency for satellite $s$ based on collected historical feedback information. In this prediction model, $\omega_0 \sqrt{3 \log (\Delta_s(t))/(2h_s(t-1))}$ and $\bar{L_s}(t-1)$ are associated with exploration and exploitation, respectively. 
\par Based on the predicted unit data round-trip latency, the real-time decisions can be made by solving the problem $\textbf{P}^{\prime}$. Next, we introduce the proposed decision-making algorithm. Similar to~\cite{Cui2023}, we omit the time index for variables for the convenience of the following description.

\subsection{Game Theoretic Decision-making Method}
\label{subsec:Game Theory-based Decision-making Algorithm}
\par Since problem $\textbf{P}^{\prime}$ is non-convex and NP-hard, a centralized algorithm could impose significant computational overhead on decision-making, which might not be appropriate for the considered real-time decision-making scenario. Therefore, we propose a game-theoretic decentralized algorithm for real-time decision making. Specifically, ISDs make offloading decisions (i.e., $\mathcal{A}$) to obtain a better QoS, while the UAV as service provider devises the corresponding decisions (i.e., $\{b_u,\mathcal{F},\mathcal{W},\mathbf{q}_{u^\prime}\}$) with the aim of maximizing the QoS of all ISDs. Therefore, the problem $\textbf{P}^{\prime}$ can be treated as a multiple-leader common-follower Stackelberg game, where the UAV is the follower and ISDs are leaders. Based on the Stackelberg game, the problem $\textbf{P}^{\prime}$ can be decomposed into two subproblems, i.e., UAV decision subproblem and ISD decision subproblem, which are detailed as follows.

\par \textbf{UAV decision subproblem.} Given an arbitrary offloading decision $\mathcal{A}$ of ISDs, the problem $\textbf{P}^{\prime}$ can be converted into the subproblem $\textbf{P1}$ to make the relevant decisions of the UAV, which can be given as

\vspace{-0.8em}
{\small
\begin{align}
    &\textbf{P1}:\underset{b_u,\mathcal{F},\mathcal{W},\mathbf{q}_{u^\prime}}{\text{min}}Q_{u2}P_u(v_u)\tau+\sum_{m\in\mathcal{M}_{l}(\mathcal{A})}V\left(\gamma^{\text{T}}T_m^{\text{LC}}+\gamma^{\text{E}}E_m^{\text{LC}}\right)+ \notag\\    &\sum_{m\in\mathcal{M}_{u}(\mathcal{A})}\left[V\left(\gamma^{\text{T}}T_m^{\text{UC}}+\gamma^{\text{E}}E_m^{\text{UC}}\right)+Q_{u1} \varpi \eta_m D_m\right]+ \notag\\
    &\sum_{m\in\mathcal{M}_{c}(\mathcal{A})}\sum_{s\in \mathcal{S}}I_{\{b_u=s\}}\left[V\left(\gamma^{\text{T}}T_m^{\text{CC}}+\gamma^{\text{E}}E_m^{\text{CC}}\right)+Q_{u1} D_m Z_s\right]\label{P1}\\
    &\text{s.t.}\ \ (\ref{Pd})-(\ref{Pj}), \notag
\end{align}}

\noindent where $\mathcal{M}_{l}(\mathcal{A})$ represents the set of ISDs for local computing, $\mathcal{M}_{u}(\mathcal{A})$ denotes the set of ISDs for UAV-assisted computing, and $\mathcal{M}_{c}(\mathcal{A})$ is the set of ISDs for cloud-assisted computing, which is known based on the offloading decision $\mathcal{A}$ of ISDs.

\par \textbf{ISD decision subproblem.} For ISD $m$, let us define $U_m^{\text{LC}}$ as the utility of local computing, $U_m^{\text{UC}}$ as the utility of UAV-assisted computing, and $U_m^{\text{CC}}$ as the utility of cloud-assisted computing, which can be given as follows:

\vspace{-0.8em}
\begin{small}
\begin{align}
&U_m^{\text{LC}}=\gamma^{\text{T}}T_m^{\text{LC}}+\gamma^{\text{E}}E_m^{\text{LC}},\label{eq.u_loc}\\
&U_m^{\text{UC}}=Q_{u1}E_{u,m}^{\text{comp}}/V+\gamma^{\text{T}}T_m^{\text{UC}}+\gamma^{\text{E}}E_m^{\text{UC}},\label{eq.u_uc}\\
&U_m^{\text{CC}}=Q_{u1}E_{u,m}^{\text{trans}}/V+\gamma^{\text{T}}T_m^{\text{CC}}+\gamma^{\text{E}}E_m^{\text{CC}}.\label{eq.u_cc}
\end{align}
\end{small}

\noindent Therefore, the utility of ISD $m$ can be expressed as
\begin{sequation}
   \label{eq.utility}
   U_m(\mathcal{A}) =
       \begin{cases}
           U_m^{\text{LC}},a_m=l,\\
           U_m^{\text{UC}},a_m=u,\\
           U_m^{\text{CC}},a_m=c.
       \end{cases} 
\end{sequation}

\par According to the announced decisions $\{b_u,\mathcal{F},\mathcal{W},\mathbf{q}_{u^\prime}\}$ of the UAV and removing irrelevant constant terms, the problem $\textbf{P}^{\prime}$ can be transformed into the problem $\textbf{P2}$ to make the offloading decisions of ISDs, which can be given as

\vspace{-0.8em}
{\small \begin{align}
    \textbf{P2}:\quad &\underset{\mathcal{A}}{\text{min}}\ V\sum_{m\in \mathcal{M}}U_m(\mathcal{A}) \label{P2}\\
    \text{s.t.}\ \
    &(\ref{Pb})\ \text{and}\ (\ref{Pc}), \notag
\end{align}}

\noindent where each ISD seeks to optimize its utility by selecting an appropriate offloading strategy.

\subsubsection{UAV decision making} To solve the problem $\textbf{P1}$, assuming a feasible UAV position $\mathbf{q}_{u^{\prime}}$, we first optimize the satellite selection $b_u$ and resource allocation $\{\mathcal{F},\mathcal{W}\}$. Then, based on the obtained $b_u^*$ and $\{\mathcal{F}^*,\mathcal{W}^*\}$, we optimize the UAV position $\mathbf{q}_{u^{\prime}}$. The details are described as follows.

\par \textbf{Satellite selection and resource allocation.} Assuming a feasible $\mathbf{q}_{u^\prime}$, $\textbf{P1}$ can be transformed into a satellite selection and resource allocation subproblem $\textbf{P1.1}$. Defining $z_{u,m}=F_{u,m}/F_u^{\text{max}}$ and removing irrelevant constant terms, the subproblem can be be formulated as

\vspace{-0.8em}
{\small
\begin{align}
    &\textbf{P1.1}:\underset{b_u,\mathcal{Z},\mathcal{W}}{\text{min}}\sum_{m\in\mathcal{M}_{u}(\mathcal{A})}V\left[\gamma^{\text{T}}\left(\frac{D_m}{w_{u,m}r_{u,m}}+\frac{D_m\eta_m}{z_{u,m}F_u^{\text{max}}}\right)+\right. \notag\\
    &\left.\frac{\gamma^{\text{E}}P_mD_m}{w_{u,m}r_{u,m}}\right]+\sum_{m\in\mathcal{M}_{c}(\mathcal{A})}\sum_{s\in \mathcal{S}}I_{\{b_u=s\}}\left\{V\left[\frac{\gamma^{\text{E}}P_mD_m}{w_{u,m}r_{u,m}}+\right.\right.\notag\\
    &\left.\left.\gamma^{\text{T}}\left(\frac{D_m}{w_{u,m}r_{u,m}}+D_mL_s\right)\right]+Q_{u1} D_m Z_s\right\}\label{P1.1}\\
    &\text{s.t.}\ \ (\ref{Pd})-(\ref{Ph}), \notag
\end{align}}

\noindent where $r_{u,m}=B_{u}\log_2\big(1+\frac{P_mg_{u,m}(t)}{\varpi_0}\big)$, and $\mathcal{Z}=\{z_{u,m}\}_{m\in\mathcal{M}_u(\mathcal{A})}$. Then, by solving the problem \textbf{P1.1}, we can obtain the closed-form optimal solutions for resource allocation and satellite selection, which are described in Theorems \ref{the:resouce-allocation} and \ref{the:satellite-selection}, respectively.

\begin{theorem}
\label{the:resouce-allocation}
The optimal resource allocation can be given as follows:
\begin{sequation}
	\label{eq.resource-allocation-solution}
	\begin{cases}
            z_{u,m}^{*} = \frac{\sqrt{\gamma^{\mathrm{T}}\eta_mD_m/F_u^{\mathrm{max}}}}{\sum_{i\in \mathcal{M}_u(\mathcal{A})}\sqrt{\gamma^{\mathrm{T}}\eta_iD_i/F_u^{\mathrm{max}}}},\\
            w_{u,m}^{*} = \frac{\sqrt{(\gamma^{\mathrm{T}}D_m+\gamma^{\mathrm{E}}P_mD_m)/r_{u,m}}}{\sum_{i\in \mathcal{M}_o(\mathcal{A})}\sqrt{(\gamma^{\mathrm{T}}D_i+\gamma^{\mathrm{E}}P_iD_i)/r_{u,i}}},
	\end{cases}
\end{sequation}
\end{theorem}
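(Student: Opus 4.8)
The plan is to exploit the separable structure of \textbf{P1.1} and reduce the resource-allocation problem to a single canonical convex program that can be solved in closed form. First I would observe that, after the normalization $z_{u,m}=F_{u,m}/F_u^{\text{max}}$, the objective of \textbf{P1.1} splits into three mutually independent groups of terms: the computing-allocation terms depending only on $\mathcal{Z}$ (constrained by (\ref{Pf}), which becomes $\sum_{m\in\mathcal{M}_u(\mathcal{A})}z_{u,m}\le 1$), the communication-allocation terms depending only on $\mathcal{W}$ (constrained by (\ref{Ph})), and the remaining terms depending only on the satellite selection $b_u$. The key point making this decomposition valid is that for each cloud-offloading IoTD $m\in\mathcal{M}_c(\mathcal{A})$ the indicator sum $\sum_{s\in\mathcal{S}}I_{\{b_u=s\}}$ collapses to $1$, so the coefficient of $1/w_{u,m}$ is independent of which satellite is chosen; consequently the optimal $\mathcal{Z}$ and $\mathcal{W}$ can be determined without reference to $b_u$ (the latter being handled separately in Theorem~\ref{the:satellite-selection}).

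Next I would cast both subproblems into the generic form $\min_{\{x_m\}}\sum_m c_m/x_m$ subject to $\sum_m x_m\le 1$ and $x_m>0$. For the computing allocation, collecting the $\mathcal{Z}$-dependent terms gives $c_m=V\gamma^{\text{T}}\eta_m D_m/F_u^{\text{max}}$ summed over $m\in\mathcal{M}_u(\mathcal{A})$; for the communication allocation, combining the UAV- and cloud-offloading contributions yields the common coefficient $c_m=V(\gamma^{\text{T}}D_m+\gamma^{\text{E}}P_mD_m)/r_{u,m}$ summed over $m\in\mathcal{M}_o(\mathcal{A})=\mathcal{M}_u(\mathcal{A})\cup\mathcal{M}_c(\mathcal{A})$. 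Each term $c_m/x_m$ is strictly convex and decreasing on $(0,1]$, so the objective is convex on a convex feasible region and the budget constraint is necessarily active at the optimum (otherwise one could enlarge some $x_m$ and strictly reduce the cost); I would also note that the resulting interior point automatically respects the box constraints (\ref{Pe}) and (\ref{Pg}) since $\sqrt{c_m}\le\sum_i\sqrt{c_i}$ forces $x_m^{*}\le 1$.

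I would then solve the canonical program by forming the Lagrangian $\sum_m c_m/x_m+\lambda(\sum_m x_m-1)$ and imposing stationarity $-c_m/x_m^2+\lambda=0$, giving $x_m=\sqrt{c_m/\lambda}$; substituting into the active constraint $\sum_m x_m=1$ fixes $\sqrt{\lambda}=\sum_i\sqrt{c_i}$ and hence $x_m^{*}=\sqrt{c_m}/\sum_i\sqrt{c_i}$. Equivalently, Cauchy--Schwarz gives $(\sum_m c_m/x_m)(\sum_m x_m)\ge(\sum_m\sqrt{c_m})^2$ with equality iff $x_m\propto\sqrt{c_m}$, which both certifies global optimality and yields the same minimizer. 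Back-substituting the two coefficient sets and observing that the common factor $V$ (and $1/F_u^{\text{max}}$ in the computing case) cancels in the ratio recovers exactly the claimed $z_{u,m}^{*}$ and $w_{u,m}^{*}$ in \eqref{eq.resource-allocation-solution}.

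The main obstacle I anticipate is the first step rather than the calculus: one must argue carefully that the communication-resource allocation genuinely decouples from the satellite decision, despite $w_{u,m}$ appearing inside the $\sum_{s}I_{\{b_u=s\}}(\cdot)$ block for cloud tasks, and that the two budget constraints (\ref{Pf}) and (\ref{Ph}) act on disjoint term-groups so that the $\mathcal{Z}$ and $\mathcal{W}$ problems separate cleanly. Once this separability is established, the remaining work is the routine closed-form solution of the canonical program.
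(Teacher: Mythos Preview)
Your proposal is correct and follows essentially the same route as the paper: fix (or decouple from) the satellite choice, recognize the resulting resource-allocation problem \textbf{P1.1.1} as convex and separable into the $\mathcal{Z}$- and $\mathcal{W}$-blocks, and solve each via the KKT/Lagrangian stationarity condition to obtain the square-root proportional allocations. The paper's proof is terser---it simply asserts convexity of \textbf{P1.1.1} via the Hessian and invokes KKT---whereas you spell out the canonical $\sum c_m/x_m$ reduction and add the Cauchy--Schwarz certificate, but the underlying argument is the same.
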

\noindent where $\mathcal{M}_o(\mathcal{A}) = \mathcal{M}_u(\mathcal{A})\cup \mathcal{M}_c(\mathcal{A})$ represents the set of ISDs who perform computation offloading.
\begin{proof}
Given an arbitrary satellite selection strategy $b_u$ and removing irrelevant constant terms, the problem $\textbf{P1.1}$ can be reformulated into a resource allocation subproblem $\textbf{P1.1.1}$

\vspace{-0.8em}
{\small
\begin{align}
    \textbf{P1.1.1}:\underset{\mathcal{Z},\mathcal{W}}{\text{min}}&\sum_{m\in\mathcal{M}_u(\mathcal{A})}\frac{\gamma^{\text{T}}\eta_mD_m}{z_{u,m}F_u^{\text{max}}}+\sum_{m\in\mathcal{M}_o(\mathcal{A})}\frac{\gamma^{\text{T}}D_m+\gamma^{\text{E}}P_mD_m}{w_{u,m}r_{u,m}} \label{P1.1-1}\\
    \text{s.t.}\ \
    & z_{u,m}> 0, \forall m\in \mathcal{M}_u(\mathcal{A}), \tag{\ref{P1.1-1}{a}} \label{P1.1-1a}\\
    & \sum_{m\in\mathcal{M}_u(\mathcal{A})} z_{u,m}\leq 1, \tag{\ref{P1.1-1}{b}} \label{P1.1-1b}\\
    & w_{u,m}> 0, \forall m\in \mathcal{M}_o(\mathcal{A}), \tag{\ref{P1.1-1}{c}} \label{P1.1-1c}\\
    & \sum_{m\in\mathcal{M}_o(\mathcal{A})} w_{u,m}\leq 1. \tag{\ref{P1.1-1}{d}} \label{P1.1-1d}
\end{align}}

\par First, we can prove that problem $\textbf{P1.1.1}$ is a convex optimization problem because the Hessian matrix of (\ref{P1.1-1}) is positive semidefinite within the domain specified by (\ref{P1.1-1a}) to (\ref{P1.1-1d}). Therefore, the optimal resource allocation can be obtained by Karush–Kuhn–Tucker (KKT) conditions~\cite{boyd2004convex}. 
\end{proof}

\begin{theorem}
\label{the:satellite-selection}
The optimal satellite selection can be given as
\begin{sequation}
    \label{eq.d_u-solution}
    b_u^*\in \mathcal{S}^{\mathrm{sel}}=\arg \min_{s \in \mathcal{S}}\left(V\gamma^{\mathrm{T}}L_s+Q_{u1} Z_s\right),
\end{sequation}

\noindent where $\mathcal{S}^{\mathrm{sel}}$ represents the candidate satellite set. Note that if $\mathcal{S}^{\mathrm{sel}}$ contains multiple satellite nodes, the UAV would randomly select one from $\mathcal{S}^{\mathrm{sel}}$.
\end{theorem}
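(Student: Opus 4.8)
The plan is to isolate how the objective of $\textbf{P1.1}$ depends on the satellite selection $b_u$ and to show that this dependence collapses to a single scalar minimization over $\mathcal{S}$. First I would observe that the first summation in \eqref{P1.1}, which runs over $\mathcal{M}_u(\mathcal{A})$, contains no occurrence of $b_u$ and is therefore an additive constant as far as satellite selection is concerned. Likewise, the optimal resource allocation from Theorem~\ref{the:resouce-allocation} involves neither $L_s$ nor $Z_s$, so $\{\mathcal{Z}^*,\mathcal{W}^*\}$ and $b_u$ are decoupled and can be optimized separately; I would fix the resource allocation at its optimum and retain only the second summation over $\mathcal{M}_c(\mathcal{A})$.

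Next I would exploit constraint~(\ref{Pd}): the UAV picks exactly one satellite, so $\sum_{s\in\mathcal{S}}I_{\{b_u=s\}}=1$. This lets me separate the bracketed term for each $m\in\mathcal{M}_c(\mathcal{A})$ into an $s$-independent piece---the IoTD-to-UAV transmission latency and energy $V(\gamma^{\text{T}}+\gamma^{\text{E}}P_m)D_m/(w_{u,m}r_{u,m})$, which is likewise constant for satellite selection---and an $s$-dependent piece $V\gamma^{\text{T}}D_mL_s+Q_{u1}D_mZ_s$. Summing the latter over $\mathcal{M}_c(\mathcal{A})$ and factoring out the strictly positive, $b_u$-independent scalar $\sum_{m\in\mathcal{M}_c(\mathcal{A})}D_m$, the remaining cost to be minimized over $b_u$ is proportional to $V\gamma^{\text{T}}L_{b_u}+Q_{u1}Z_{b_u}$. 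Since scaling by a positive constant preserves the minimizer, the optimum is $b_u^*\in\arg\min_{s\in\mathcal{S}}(V\gamma^{\text{T}}L_s+Q_{u1}Z_s)$, and any satellite attaining this minimum---selected at random when there are ties---is optimal.

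I do not expect a genuinely difficult step; the argument is a decoupling-and-factoring observation rather than a hard optimization. The only point needing care is confirming that the IoTD-to-UAV communication terms, despite sitting inside the $\sum_{s}I_{\{b_u=s\}}$ bracket in \eqref{P1.1}, really are independent of which satellite is chosen. This follows precisely because the indicator sums to one under~(\ref{Pd}), collapsing those terms to their satellite-free value; once this is verified, both the scalar minimization and the random tie-breaking rule are immediate.
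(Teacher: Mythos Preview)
Your proposal is correct and follows essentially the same route as the paper: substitute the optimal resource allocation from Theorem~\ref{the:resouce-allocation}, strip away the $b_u$-independent terms, and reduce $\textbf{P1.1}$ to the scalar minimization $\textbf{P1.1.2}$ over $\mathcal{S}$. Your treatment is in fact more explicit than the paper's---you spell out why the IoTD-to-UAV communication terms inside the indicator bracket collapse via $\sum_{s}I_{\{b_u=s\}}=1$, whereas the paper simply labels these as ``irrelevant constant terms''---but the underlying argument is identical.
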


\begin{proof}
By substituting the optimal resource allocation solutions (\ref{eq.resource-allocation-solution}) into the problem $\textbf{P1.1}$ and removing irrelevant constant terms, the problem $\textbf{P1.1}$ can be reformulated into the satellite selection decision problem $\textbf{P1.1.2}$, which can be presented as follows:

\vspace{-0.8em}
{\small
\begin{align}
    \textbf{P1.1.2}:\underset{b_u}{\text{min}}\ &\sum_{s\in\mathcal{S}}I_{\{b_u=s\}}\sum_{m\in\mathcal{M}_c(\mathcal{A})}\left(V\gamma^{\mathrm{T}}L_s+Q_{u1} Z_s\right)D_m\label{P1.1-2}\\
    \text{s.t.}\ \
    &b_u\in \mathcal{S}. \notag
\end{align}}

\noindent By solving this problem, we can determine the optimal decision for satellite selection, as defined in equation~\eqref{eq.d_u-solution}.
\end{proof}

\par \textbf{UAV trajectory Control.} Given the optimal satellite selection decision $b_u^{*}$, resource allocation $\{\mathcal{F}^{*},\mathcal{W}^{*}\}$, and removing irrelevant constant terms, the problem $\textbf{P1}$ can be converted into the subproblem $\textbf{P1.2}$ to decide the UAV trajectory control, i.e.,

\vspace{-0.8em}
{\small \begin{align}
    &\textbf{P1.2}:\ \underset{\mathbf{q}_{u^\prime}}{\text{min}}\ V \sum_{m\in \mathcal{M}_o(\mathcal{A})}\frac{\gamma^{\text{T}}D_m+\gamma^{\text{E}}P_mD_m}{w_{u,m}^{*}B_u\log_2(1+\frac{\phi_m}{\|\mathbf{q}_{u^\prime}-\mathbf{q}_m\|^2+H^2})}+\notag \\
    &Q_{u2}\left(C_1\big(1+\frac{3 v_u^2}{U_{\text {p}}^2}\big)+C_2\sqrt{\sqrt{C_3+\frac{v_u^4}{4}}-\frac{v_u^2}{2}}+C_4v_u^3\right)\tau \label{P1.2}\\
    &\text{s.t.}\ \ (\ref{Pi})-(\ref{Pj}), \notag
\end{align}}

\noindent where $\mathbf{q}_{u^\prime}=\mathbf{q}_{u}^{t+1}$, $\mathbf{q}_u=\mathbf{q}_{u}^t$, $v_n=||\mathbf{q}_{u^\prime}-\mathbf{q}_{u}||/\tau$, and $\phi_{m} = (P_m10^{-(20 \log_{10}(4 \pi f_u /v_c)+\rho_{u,m}^{\text{LoS}} \eta^{\text{LoS}}+(1-\rho_{u,m}^{\text{LoS}}) \eta^{\text{nLoS}})/10})/ \varpi_0$. Clearly, the function (\ref{P1.2}) is non-convex concerning $\mathbf{q}_{n^\prime}$ due to the following non-convex terms
\begin{sequation}
\label{eq.non-convex-terms}
    \begin{cases}
    TM_m=\frac{1}{\log_2\big(1+\frac{\phi_m}{\|\mathbf{q}_{u^\prime}-\mathbf{q}_m\|^2+H^2}\big)},\ \forall m\in\mathcal{M}_o(\mathcal{A}), \\
    TM_0=\sqrt{\sqrt{C_3+v_u^4/4}-v_u^2/2}.
    \end{cases}
\end{sequation}

\noindent Next, we transform the objective function into a convex function by introducing slack variables.

\par For the non-convex term $TM_0$, we introduce the slack variable $\xi$ such that $\xi=TM_0$ and add the following constraint
\begin{equation}
\label{eq.slack1}
    \xi \geq \sqrt{\sqrt{C_3+v_u^4/4}-v_u^2/2} \Longrightarrow C_3/\xi^2 \leq \xi^2+v_u^2.
\end{equation}

\par For the non-convex term $TM_m$, we introduce the slack variable $\zeta_m$ such that $1/\zeta_m=TM_m$ and add the following constraint
\begin{equation}
\label{eq.slack2}
    \zeta_m \leq \log _2\big(1+\frac{\phi_{m}}{H^2+||\mathbf{q}_{u^{\prime}}-\mathbf{q}_m||^2}\big),\forall m\in\mathcal{M}_o(\mathcal{A}).
\end{equation}

\par According to the abovementioned relaxation transformation, the problem $\textbf{P1.2}$ can be equivalently transformed as
\begin{align}
    \mathbf{P1.2}^{\prime}:&\ \underset{\mathbf{q}_{u^\prime},\boldsymbol{\zeta},\xi}{\text{min}}\ V \sum_{m\in \mathcal{M}_o(\mathcal{A})}\frac{\gamma^{\text{T}}D_m+\gamma^{\text{E}}P_mD_m}{w_{u,m}^{*}B_u\zeta_m}\notag\\
    &+Q_{u2}\left(C_1\left(1+3 v_u^2/U_{\text {p}}^2\right)+C_2\xi_n+C_4v_u^3\right)\tau\label{P1.2-temp}\\
    \text{s.t.} \ &(\ref{Pi})-(\ref{Pj}),(\ref{eq.slack1})\ and\ (\ref{eq.slack2}), \notag
\end{align}
\noindent where $\boldsymbol{\zeta}=\{\zeta_{m}\}_{m\in\mathcal{M}_o(\mathcal{A})}$. For problem $\mathbf{P1.2}^{\prime}$, the optimization objective (\ref{P1.2}) is convex but constraints (\ref{eq.slack1}) and (\ref{eq.slack2}) are still non-convex. Similar to~\cite{UAV-H}, the successive convex approximation (SCA) method can be adopted to handle the non-convexity of above constraints, which is demonstrated in the following Theorems~\ref{pro:pro5-2-1} and~\ref{pro:pro5-2-2}.

\begin{theorem}
\label{pro:pro5-2-1}
Let $f(\mathbf{q}_{u^\prime},\xi)=\xi^2+v_u^2$, and given a local point $\mathbf{q}^{(i)}_{u^\prime}$ at the $i$-th iteration, a global concave lower bound of $f(\mathbf{q}_{u^\prime},\xi)$ can be obtained as follows:
{\small \begin{align}
    \label{eq.pro5-2-1}
    f^{(i)}(\mathbf{q}_{u^\prime},\xi) \triangleq&\left(\xi^{(i)}\right)^2+2 \xi^{(i)}\left(\xi-\xi^{(i)}\right)+\|\mathbf{q}_{u^{\prime}}^{(i)}-\mathbf{q}_u\|^2/\tau^2\notag\\
    &+2/\tau^2(\mathbf{q}_{u^{\prime}}^{(i)}-\mathbf{q}_u)^T\left(\mathbf{q}_{u^{\prime}}-\mathbf{q}_u\right),
\end{align}}

\noindent where $\xi^{(i)}$ is defined as
\begin{sequation}
    \label{eq.y-l}
    \xi^{(i)}=\sqrt{\sqrt{C_3+\frac{||\mathbf{q}_{u^{\prime}}^{(i)}-\mathbf{q}_u||^4}{4 \tau^4}}-\frac{||\mathbf{q}_{u^{\prime}}^{(i)}-\mathbf{q}_u||^2}{2 \tau^2}}.
\end{sequation}
\end{theorem}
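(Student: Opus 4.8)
The plan is to observe that, after inserting the definition $v_u=\|\mathbf{q}_{u^\prime}-\mathbf{q}_u\|/\tau$ (with $\mathbf{q}_u=\mathbf{q}_u^t$ fixed at the current slot), the map $f(\mathbf{q}_{u^\prime},\xi)=\xi^2+v_u^2=\xi^2+\|\mathbf{q}_{u^\prime}-\mathbf{q}_u\|^2/\tau^2$ is \emph{jointly convex} in the pair $(\mathbf{q}_{u^\prime},\xi)$, and then to identify $f^{(i)}$ as nothing more than the first-order Taylor expansion of $f$ about the iterate $(\mathbf{q}_{u^\prime}^{(i)},\xi^{(i)})$. Once convexity is in hand, the supporting-hyperplane (gradient) inequality delivers the global lower bound for free, and the affine form of the expansion delivers concavity.

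First I would verify the joint convexity. The two summands act on disjoint variable blocks: $\xi\mapsto\xi^2$ is a one-dimensional convex quadratic, and $\mathbf{q}_{u^\prime}\mapsto\|\mathbf{q}_{u^\prime}-\mathbf{q}_u\|^2/\tau^2$ is a convex quadratic with constant Hessian $(2/\tau^2)\mathbf{I}$. Hence the Hessian of $f$ is the block-diagonal, positive-definite matrix $\mathrm{diag}\!\left(2,\,(2/\tau^2)\mathbf{I}\right)$, so $f$ is convex on its entire domain; this is the same Hessian-positivity argument already used for Theorem~\ref{the:resouce-allocation}.

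Next I would invoke the first-order characterization of a convex function, namely the gradient inequality $f(\mathbf{y})\ge f(\mathbf{x}^{(i)})+\nabla f(\mathbf{x}^{(i)})^{\top}(\mathbf{y}-\mathbf{x}^{(i)})$, which holds \emph{globally} for every $\mathbf{y}$ precisely because $f$ is convex. Computing $\partial f/\partial\xi=2\xi$ and $\nabla_{\mathbf{q}_{u^\prime}}f=2(\mathbf{q}_{u^\prime}-\mathbf{q}_u)/\tau^2$ and evaluating them at $(\mathbf{q}_{u^\prime}^{(i)},\xi^{(i)})$ reproduces $f^{(i)}$ term by term: $(\xi^{(i)})^2+\|\mathbf{q}_{u^\prime}^{(i)}-\mathbf{q}_u\|^2/\tau^2$ is the constant $f(\mathbf{x}^{(i)})$, $2\xi^{(i)}(\xi-\xi^{(i)})$ is the $\xi$-gradient contribution, and $(2/\tau^2)(\mathbf{q}_{u^\prime}^{(i)}-\mathbf{q}_u)^{\top}(\mathbf{q}_{u^\prime}-\mathbf{q}_{u^\prime}^{(i)})$ is the $\mathbf{q}_{u^\prime}$-gradient contribution. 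Since $f^{(i)}$ is affine in $(\mathbf{q}_{u^\prime},\xi)$ it is trivially concave, and the gradient inequality gives $f^{(i)}(\mathbf{q}_{u^\prime},\xi)\le f(\mathbf{q}_{u^\prime},\xi)$ everywhere, which is exactly the asserted global concave lower bound.

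The algebra is routine, so I do not anticipate a genuine obstacle; the one place demanding care is the bookkeeping of the affine $\mathbf{q}_{u^\prime}$-term, where the linearization must be taken about $\mathbf{q}_{u^\prime}^{(i)}$, so that the increment in that inner product is $\mathbf{q}_{u^\prime}-\mathbf{q}_{u^\prime}^{(i)}$. I would double-check the displayed last term against this expansion, since writing the increment as $\mathbf{q}_{u^\prime}-\mathbf{q}_u$ instead would introduce a spurious $2\|\mathbf{q}_{u^\prime}^{(i)}-\mathbf{q}_u\|^2/\tau^2$ and break the lower-bound inequality. Finally, I would remark that the particular value of $\xi^{(i)}$ in~\eqref{eq.y-l} is merely the boundary value of $\xi$ associated with $\mathbf{q}_{u^\prime}^{(i)}$ and plays no role in the inequality itself — the bound holds for the Taylor expansion about any feasible point — so it may simply be carried over as a given from the SCA iteration.
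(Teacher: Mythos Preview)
Your approach is exactly the one the paper uses: it simply notes that $f(\mathbf{q}_{u^\prime},\xi)$ is a convex quadratic, so its first-order Taylor expansion at $(\mathbf{q}_{u^\prime}^{(i)},\xi^{(i)})$ is a global (affine, hence concave) lower bound. Your side remark about the $\mathbf{q}_{u^\prime}$-increment is well taken --- the correct linearization has $(\mathbf{q}_{u^\prime}-\mathbf{q}_{u^\prime}^{(i)})$ rather than $(\mathbf{q}_{u^\prime}-\mathbf{q}_u)$ in the last term, so the displayed formula appears to carry a typo that your derivation would fix.
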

\begin{proof}
Since $f(\mathbf{q}_{u^\prime},\xi)$ is a convex quadratic form, the first-order Taylor expansion of $f(\mathbf{q}_{u^\prime},\xi)$ at local point $\mathbf{q}^{(i)}_{u^\prime}$ is a global concave lower bound.
\end{proof}

\begin{theorem}
\label{pro:pro5-2-2}
Let $g_m(\mathbf{q}_{u^\prime})=\log _2\big(1+\frac{\phi_m}{H^2+||\mathbf{q}_{u^{\prime}}-\mathbf{q}_m||^2}\big)$, and given a local point $\mathbf{q}^{(i)}_{u^\prime}$ at the $i$-th iteration, a global concave lower bound of $g_m(\mathbf{q}_{u^\prime})$ can be obtained as follows:
{\small \begin{align}
    \label{eq.taylor2}
    &g_m^{(i)}(\mathbf{q}_{u^\prime}) \triangleq \log _2\big(1+\frac{\phi_m}{H^2+||\mathbf{q}_{u^{\prime}}^{(i)}-\mathbf{q}_m||^2}\big)\notag \\
    & -\frac{\phi_m (\log_2 e)(||\mathbf{q}_{u^{\prime}}-\mathbf{q}_m||^2-||\mathbf{q}_{u^{\prime}}^{(i)}-\mathbf{q}_m||^2)}{(\phi_m+H^2+||\mathbf{q}_{u^{\prime}}^{(i)}-\mathbf{q}_m||^2)(H^2+||\mathbf{q}_{u^{\prime}}^{(i)}-\mathbf{q}_m||^2)}.
\end{align}}
\end{theorem}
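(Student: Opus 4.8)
The plan is to reduce $g_m$ to a scalar function of the squared distance and exploit convexity in that single variable. Define $r \triangleq \|\mathbf{q}_{u^\prime}-\mathbf{q}_m\|^2$ and $r^{(i)} \triangleq \|\mathbf{q}_{u^\prime}^{(i)}-\mathbf{q}_m\|^2$, so that $g_m$ is the composition $\psi(r)$ with $\psi(r) = \log_2\big(1 + \phi_m/(H^2 + r)\big)$. First I would establish that $\psi$ is convex in $r$ on the feasible range $r \geq 0$. Writing $\psi(r) = \log_2(H^2 + r + \phi_m) - \log_2(H^2 + r)$ and differentiating twice gives $\psi''(r) = (\log_2 e)\big[(H^2+r)^{-2} - (H^2+r+\phi_m)^{-2}\big]$, which is strictly positive whenever $\phi_m > 0$. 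Hence $\psi$ is convex, and incidentally strictly decreasing, in $r$.

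Second, because $\psi$ is convex, its first-order Taylor expansion about $r^{(i)}$ is a global underestimator:
\[
\psi(r) \geq \psi(r^{(i)}) + \psi'(r^{(i)})\,(r - r^{(i)}), \quad \forall r \geq 0.
\]
Computing the first derivative yields $\psi'(r^{(i)}) = -\phi_m (\log_2 e)/\big[(\phi_m + H^2 + r^{(i)})(H^2 + r^{(i)})\big]$. Substituting $r = \|\mathbf{q}_{u^\prime}-\mathbf{q}_m\|^2$ and $r^{(i)} = \|\mathbf{q}_{u^\prime}^{(i)}-\mathbf{q}_m\|^2$ reproduces exactly the stated expression for $g_m^{(i)}(\mathbf{q}_{u^\prime})$ and simultaneously certifies the lower-bound property $g_m(\mathbf{q}_{u^\prime}) \geq g_m^{(i)}(\mathbf{q}_{u^\prime})$ for every $\mathbf{q}_{u^\prime}$, with equality at $\mathbf{q}_{u^\prime} = \mathbf{q}_{u^\prime}^{(i)}$.

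Finally, I would verify concavity of $g_m^{(i)}$ as a function of the vector $\mathbf{q}_{u^\prime}$, which is what the SCA step in $\mathbf{P1.2}^{\prime}$ genuinely requires. The key observation is that $g_m^{(i)}(\mathbf{q}_{u^\prime})$ is an affine function of $\|\mathbf{q}_{u^\prime}-\mathbf{q}_m\|^2$ whose slope $\psi'(r^{(i)})$ is strictly negative. Since $\|\mathbf{q}_{u^\prime}-\mathbf{q}_m\|^2$ is convex in $\mathbf{q}_{u^\prime}$, multiplying it by the negative constant $\psi'(r^{(i)})$ produces a concave function, and adding the constant $\psi(r^{(i)})$ preserves concavity; thus $g_m^{(i)}$ is globally concave in $\mathbf{q}_{u^\prime}$.

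The point that is easy to overlook — and hence the crux of the argument — is that convexity of $\psi$ in the scalar $r$ supplies only the lower-bound inequality; it does \emph{not} by itself guarantee that the surrogate is concave in the decision vector $\mathbf{q}_{u^\prime}$. The two properties stem from different sources: the lower bound from convexity of $\psi$, and the concavity in $\mathbf{q}_{u^\prime}$ from the \emph{negativity} of $\psi'(r^{(i)})$ combined with convexity of the squared norm. Keeping these two strands separate, and checking the sign of the linearized slope, is the essential step.
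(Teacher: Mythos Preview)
Your proposal is correct and follows essentially the same route as the paper: reduce to the scalar function $f(x)=\log_2\!\big(1+\phi_m/(H^2+x)\big)$, verify convexity in $x$ via the second derivative, apply the first-order Taylor inequality at $x_0=\|\mathbf{q}_{u'}^{(i)}-\mathbf{q}_m\|^2$, and substitute $x=\|\mathbf{q}_{u'}-\mathbf{q}_m\|^2$. Your additional paragraph separating the lower-bound property from the concavity of the surrogate in $\mathbf{q}_{u'}$ (using the negativity of $\psi'(r^{(i)})$ together with convexity of the squared norm) is a point the paper's proof leaves implicit, so your write-up is in fact slightly more complete.
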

\begin{proof}
 We first consider the function $f(x)=\log_2(1+\frac{\phi}{(H^2+x)})$, where $\phi > 0$, and $x\geq 0$. The first-derivative of $f(x)$ can be calculated as:
    \begin{equation}
        \label{eq.f_x_first}
        \frac{\partial f(x)}{\partial x}=-\frac{\phi (\log_2e)}{[\phi+(H^2+x)](H^2+x)}.
    \end{equation}
    Accordingly, the second-order derivative of $f(x)$ can be calculated as
    \begin{equation}
        \label{eq.f_x_second}
        \frac{\partial^2 f(x)}{\partial x^2}=\frac{\phi (\log_2e)[2(H^2+x)+\phi]}{[\phi(H^2+x)+(H^2+x)^{2}]^2}.
    \end{equation}
    
    Clearly, since $\frac{\partial^2 f(x)}{\partial x^2}>0$, $f(x)$ is a convex function. Therefore, the first-order Taylor expansion of $f(x)$ at a local point $x_0$ is a global concave under-estimator of $f(x)$, i.e., the following inequality holds for any $x$
    \begin{sequation}
        \label{ieq1}
        f(x)\geq \log_2(1+\frac{\phi}{H^2+x_0})-\frac{\phi (\log_2e)(x-x_0)}{(\phi+H^2+x_0)(H^2+x_0)}.
    \end{sequation}
    
    Substituting $\phi=\phi_{m}$, $x=\|\mathbf{q}_{u^{\prime}}-\mathbf{q}_m\|^2$, and $x_0=\|\mathbf{q}^{(l)}_{u^{\prime}}-\mathbf{q}_m\|^2$ into the inequality (\ref{ieq1}), we can prove the theorem.
\end{proof}
\par According to Theorems \ref{pro:pro5-2-1} and \ref{pro:pro5-2-2}, at the $i$-th iteration, constraints (\ref{eq.slack1}) and (\ref{eq.slack2}) can be approximated as

\vspace{-0.8em}
{\small \begin{align}
    \label{eq.cons1}
    &\frac{C_3}{\xi^2}\leq f^{(i)}(\mathbf{q}_{u^\prime},\xi),\\
    &\zeta_m\leq g_m^{(i)}(\mathbf{q}_{u^\prime}),
\end{align}}

\noindent which are convex. Therefore, the problem $\mathbf{P1.2}^{\prime}$ is converted into a convex optimization problem, which can be efficiently resolved by off-the-shelf optimization tools such as CVX~\cite{cvx}. 

\subsubsection{ISD decision making} To decide the offloading decisions of ISDs, we can model the problem $\mathbf{P2}$ as a multi-ISDs computation offloading game (MISD-TOG). 
\par \textbf{Game Formulation.} Specifically, the MISD-TOG can be defined as a triplet $\Gamma=\{\mathcal{M},\mathbb{A}, (U_m(\mathcal{A}))_{m\in \mathcal{M}}\}$. 
\begin{itemize}
\item $\mathcal{M}=\{1,2,\dots,M\}$ denotes the set of players, i.e., all ISDs.
\item $\mathbb{A}=\mathbf{A}_1\times\dots\times\mathbf{A}_M$ represents the strategy space, wherein $\mathbf{A}_m=\{l,u,c\}$ is the set of offloading strategies for player $m\ (m\in \mathcal{M})$, $a_m\in\mathbf{A}_m$ denotes the offloading decision of player $m$, and $\mathcal{A}=(a_1,\dots,a_M)\in \mathbb{A}$ denotes a strategy profile.
\item $(U_m(\mathcal{A}))_{m\in \mathcal{M}}$ is the utility function of player $m$ that assigns a real number to each strategy profile $\mathcal{A}$.
\end{itemize}
\noindent In the game, each player strives to minimize its utility by selecting an appropriate offloading strategy. Therefore, the MISD-TOG can be mathematically described by a distributed optimization problem, i.e.,
\begin{sequation}
    \label{eq.task-offloading}
    \textbf{P2}^{\prime}:\underset{a_m}{\text{min}}\ U_m(a_m,a_{-m}),\ \forall m \in \mathcal{M},
\end{sequation}

\noindent where $a_{-m}=(a_1,\dots,a_{m-1},a_{m+1},\dots,a_M)$ denotes the offloading decisions of the other players except player $m$.

\par \textbf{The Solution of MISD-TOG.} To determine the solution of MU-TOG, we begin by introducing the concept of Nash equilibrium. A Nash equilibrium stands for a state in which no player is motivated to change their current strategy unilaterally. Definition~\ref{def:def1} presents the formal definition.

\begin{definition}
\label{def:def1}
If and only if a strategy profile $\mathcal{A}^*=(a_1^*,\dots,a_M^*)$ satisfies the following condition, it is a Nash equilibrium of game $\Gamma$
\begin{sequation}
    U_m(a_m^*,a_{-m}^*)\leq U_m(a_m^{\prime},a_{-m}^*) \quad  \forall a_m^{\prime}\in \mathbf{A}_m, m \in \mathcal{M}.
\end{sequation}
\end{definition}

\par Next, we introduce an important framework called the exact potential game~\cite{potential} through Definitions~\ref{def:def2} and~\ref{def:def_FIP}, to analyze whether there is a Nash equilibrium for MISD-TOG and how to obtain a Nash equilibrium.

\begin{definition}
    \label{def:def2}
     If the game $\Gamma$ has a potential function $F(\mathcal{A})$ that satisfies the following condition, it can be regarded as an exact potential game.

     \vspace{-0.8em}
     {\small
    \begin{align}
    \label{PG-def}
        &U_m(a_m,a_{-m})-U_m(a_m^{\prime},a_{-m})=F(a_m,a_{-m})-F(a_m^{\prime},a_{-m}),\notag\\ 
        &\forall (a_m,a_{-m}),(a_m^{\prime},a_{-m})\in \mathbb{A},   
    \end{align}}

\end{definition}

\begin{definition}
    A Nash equilibrium and a finite improvement path (FIP) always exist for an exact potential game with finite strategy sets~\cite{potential,2016Potential}.
    \label{def:def_FIP}
\end{definition}

\par The FIP implies that a Nash equilibrium can be obtained in a finite number of iterations by any best-response correspondence. Specifically, the best-response correspondence can be formally defined as follows:

\begin{definition}
    For each player $m\in\mathcal{M}$, their best response correspondence corresponds to a set-valued mapping $\mathbf{B}_m(a_{-m})$: $\mathbf{A}_{-m}\longmapsto \mathbf{A}_{m}$ such that
\begin{sequation}
    \label{eq.bestresponse}
        \mathbf{B}_m(a_{-m})=\left\{a_m^* \mid a_m^* \in \underset{a_m \in \mathbf{A}_m}{\arg \max } U_m\left(a_m, a_{-m}\right)\right\} .
    \end{sequation}
    \label{def:def_bestresponse}
\end{definition}
\par Therefore, by demonstrating that the MISD-TOG is an exact potential game, we can obtain a Nash equilibrium solution for it. The proof for this is provided in Theorem~\ref{theorem-PG}.

\begin{theorem}
\label{theorem-PG}
\par The MISD-TOG is an exact potential game with the potential function as follows:

{\small
\begin{align}
\label{eq.PF}
  F(\mathcal{A})=&\sum_{i\in \mathcal{M}}I_{\{a_i=u\}}\big(Q_{u1}E_{u,i}^{\mathrm{comp}}/V+\phi_i\sum_{j\leq i}I_{\{a_j=u\}}\phi_j\big)+\notag \\
  &\sum_{i\in \mathcal{M}}I_{\{a_i=c\}}\sum_{s\in\mathcal{S}}I_{\{b_u^*=s\}}\left(Q_{u1}E_{u,i}^{\mathrm{trans}}/V+\gamma^{\mathrm{T}}D_iL_s\right)+ \notag \\
  &\sum_{i\in \mathcal{M}}I_{\{a_i\in \{u,c\}\}}\gamma_i\sum_{j\leq i}I_{\{a_j\in \{u,c\}\}}\gamma_j+\sum_{i\in \mathcal{M}}I_{\{a_i=l\}}U_i^{\mathrm{LC}},
\end{align}}

\noindent where $\phi_i=\sqrt{\frac{\gamma^{\mathrm{T}}\eta_iD_i}{F_u^{\mathrm{max}}}}$, and $\gamma_i=\sqrt{\frac{\gamma^{\mathrm{T}}D_i+\gamma^{\mathrm{E}}P_iD_i}{r_{u,i}}}$.

\end{theorem}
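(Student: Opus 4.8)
The plan is to verify the exact-potential condition of Definition~\ref{def:def2} head-on: for every player $m$ and every pair of strategies $a_m,a_m'\in\{l,u,c\}$ with $a_{-m}$ fixed, I would show $U_m(a_m,a_{-m})-U_m(a_m',a_{-m})=F(a_m,a_{-m})-F(a_m',a_{-m})$. The first step is to substitute the closed-form optimal allocation from Theorem~\ref{the:resouce-allocation} and the satellite selection $b_u^*$ from Theorem~\ref{the:satellite-selection} into the utilities \eqref{eq.u_uc} and \eqref{eq.u_cc}. Using $z_{u,m}^*=\phi_m/\sum_{i\in\mathcal{M}_u(\mathcal{A})}\phi_i$ and $w_{u,m}^*=\gamma_m/\sum_{i\in\mathcal{M}_o(\mathcal{A})}\gamma_i$, the weighted execution and transmission costs collapse to congestion forms, namely $\gamma^{\text{T}}\eta_mD_m/(z_{u,m}^*F_u^{\text{max}})=\phi_m\sum_{i\in\mathcal{M}_u(\mathcal{A})}\phi_i$ (since $\phi_m^2=\gamma^{\text{T}}\eta_mD_m/F_u^{\text{max}}$) and $(\gamma^{\text{T}}+\gamma^{\text{E}}P_m)D_m/(w_{u,m}^*r_{u,m})=\gamma_m\sum_{i\in\mathcal{M}_o(\mathcal{A})}\gamma_i$ (since $\gamma_m^2=(\gamma^{\text{T}}+\gamma^{\text{E}}P_m)D_m/r_{u,m}$). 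Thus $U_m^{\text{UC}}=Q_{u1}E_{u,m}^{\text{comp}}/V+\phi_m\sum_{i\in\mathcal{M}_u(\mathcal{A})}\phi_i+\gamma_m\sum_{i\in\mathcal{M}_o(\mathcal{A})}\gamma_i$ and $U_m^{\text{CC}}=Q_{u1}E_{u,m}^{\text{trans}}/V+\gamma^{\text{T}}\sum_{s}I_{\{b_u^*=s\}}D_mL_s+\gamma_m\sum_{i\in\mathcal{M}_o(\mathcal{A})}\gamma_i$, while $U_m^{\text{LC}}$ is independent of $a_{-m}$.

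The crux is an algebraic identity for the quadratic interaction terms. For an index set $S$ and weights $\{x_i\}$, write $G=\sum_{i}I_{\{a_i\in S\}}x_i\sum_{j\leq i}I_{\{a_j\in S\}}x_j$; I would show that when player $m$ enters $S$ the marginal change is exactly $x_m\sum_{i\in S\cup\{m\}}x_i$, independent of where $m$ sits in the ordering, because the interactions with indices below $m$ (through the inner sum at $i=m$) and above $m$ (through the outer terms $i>m$) together reconstruct the full weight sum. This is the continuous analogue of Rosenthal's congestion potential, and the $j\leq i$ ordering is precisely what makes each pairwise interaction counted once. Applying this with $S=\{u\}$ and weight $\phi$ identifies the marginal contribution of $m$ to the quadratic part of $F_1$ with $\phi_m\sum_{i\in\mathcal{M}_u(\mathcal{A})}\phi_i$, and with $S=\{u,c\}$ and weight $\gamma$ identifies the contribution to $F_3$ with $\gamma_m\sum_{i\in\mathcal{M}_o(\mathcal{A})}\gamma_i$.

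With these pieces the verification becomes bookkeeping, and by transitivity of the difference condition (the map $(a_m,a_m')\mapsto U_m(a_m,\cdot)-U_m(a_m',\cdot)$ telescopes, as does its $F$ counterpart) it suffices to check two transitions, say $l\to u$ and $l\to c$. For $l\to u$, player $m$ joins both the UAV set and the offloading set and leaves the local set, so $\Delta F=Q_{u1}E_{u,m}^{\text{comp}}/V+\phi_m\sum_{i\in\mathcal{M}_u(\mathcal{A})}\phi_i+\gamma_m\sum_{i\in\mathcal{M}_o(\mathcal{A})}\gamma_i-U_m^{\text{LC}}$, which reproduces $U_m^{\text{UC}}-U_m^{\text{LC}}$ term by term; the $l\to c$ case changes only $F_2$, the $\{u,c\}$-part of $F_3$, and $F_4$, giving $U_m^{\text{CC}}-U_m^{\text{LC}}$. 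As a consistency check I would confirm the $u\leftrightarrow c$ transition, where the shared-bandwidth term $\gamma_m\sum_{i\in\mathcal{M}_o(\mathcal{A})}\gamma_i$ cancels on both sides because membership in $\mathcal{M}_o(\mathcal{A})$ is unchanged, leaving only the UAV-computation congestion and the cloud latency/energy terms to be swapped.

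The main obstacle is the externality concealed in the resource allocation: a unilateral deviation by $m$ rescales $w_{u,m'}^*$ and $z_{u,m'}^*$ for every other offloading user $m'$ and hence perturbs their utilities, yet the potential condition constrains only the change in $m$'s own utility. The quadratic form with the $j\leq i$ ordering is exactly what reconciles this—its marginal change attributes to $m$ the aggregate of all pairwise congestion interactions involving $m$, so the potential increment coincides with $m$'s private cost change even though the realized costs of others also move. Establishing that this marginal-contribution identity holds regardless of $m$'s position in the summation order is the one step demanding genuine care; the remaining deviations then follow by direct substitution, completing the proof that $F(\mathcal{A})$ in \eqref{eq.PF} is an exact potential and hence (by Definition~\ref{def:def_FIP}) that a Nash equilibrium exists.
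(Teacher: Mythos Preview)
Your proposal is correct and follows essentially the same route as the paper: substitute the optimal allocations from Theorems~\ref{the:resouce-allocation} and~\ref{the:satellite-selection} into \eqref{eq.u_uc}--\eqref{eq.u_cc} to expose the congestion structure, then verify Definition~\ref{def:def2} by checking the unilateral transitions among $\{l,u,c\}$. The paper carries out all three cases explicitly using the split notation $\phi^{\leq m},\phi^{>m},\gamma^{\leq m},\gamma^{>m}$, whereas you invoke transitivity to reduce to two and abstract the key marginal-contribution identity for the ordered quadratic form---a cleaner packaging of the same computation.
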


\begin{proof}
Substituting \eqref{eq.resource-allocation-solution} into \eqref{eq.u_uc}, we can obtain the utility of UAV-assisted computing as follows:

\begin{sequation}
\label{eq.suav_ec}
    \begin{aligned}
    &U_m^{\mathrm{UC}}(\mathcal{A})=\frac{Q_{u1}}{V}E_{u,m}^{\mathrm{comp}}+\sqrt{\frac{\gamma^{\mathrm{T}}\eta_mD_m}{F_u^{\mathrm{max}}}}\sum_{i\in \mathcal{M}}I_{\{a_i=u\}}\sqrt{\frac{\gamma^{\mathrm{T}}\eta_iD_i}{F_u^{\mathrm{max}}}}\\
    &+\sqrt{\frac{\gamma^{\mathrm{T}}D_m+\gamma^{\mathrm{E}}P_mD_m}{r_{u,m}}}\sum_{i\in \mathcal{M}}I_{\{a_i\in \{u,c\}\}}\sqrt{\frac{\gamma^{\mathrm{T}}D_i+\gamma^{\mathrm{E}}P_iD_i}{r_{u,i}}}
    \end{aligned}
\end{sequation}

\noindent Similarly, substituting \eqref{eq.resource-allocation-solution} and \eqref{eq.d_u-solution} into \eqref{eq.u_cc}, we can obtain the utility of cloud-assisted computing as follows:
\begin{sequation}
\label{eq.luav_ec}
    \begin{aligned}
    &U_m^{\mathrm{CC}}(\mathcal{A})=\sum_{s\in \mathcal{S}}I_{\{b_u^*=s\}}(Q_{u1}E_{u,m}^{\mathrm{trans}}/V+D_mL_s)+ \\
    &\sqrt{\frac{\gamma^{\mathrm{T}}D_m+\gamma^{\mathrm{E}}P_mD_m}{r_{u,m}}}\sum_{i\in \mathcal{M}}I_{\{a_i\in \{u,c\}\}}\sqrt{\frac{\gamma^{\mathrm{T}}D_i+\gamma^{\mathrm{E}}P_iD_i}{r_{u,i}}}
    \end{aligned}
\end{sequation}

\par Let $\phi_{i}=\sqrt{\frac{\gamma^{\mathrm{T}}\eta_iD_i}{F_u^{\mathrm{max}}}}$ and $\gamma_{i}=\sqrt{\frac{\gamma^{\mathrm{T}}D_i+\gamma^{\mathrm{E}}P_iD_i}{r_{u,i}}}$, where $i\in\mathcal{M}$. Furthermore, for an arbitrary ordering of ISDs, let us introduce the following notation:
\begin{sequation}
    \begin{array}{ll}
    \phi^{\leq m}(\mathcal{A})=\underset{{j \leq m}}\sum I_{\{a_j=u\}}\phi_{j}, & \phi^{>m}(\mathcal{A})=\underset{{j>m}}\sum I_{\{a_j=u\}} \phi_{j}, \\
    \gamma^{\leq m}(\mathcal{A})=\underset{{j \leq m}}\sum I_{\{a_j\in\{u,c\}\}}\gamma_{j}, & \gamma^{>m}(\mathcal{A})=\underset{{j>m}}\sum I_{\{a_j\in\{u,c\}\}} \gamma_{j},
    \end{array}
\notag
\end{sequation}

\par Suppose ISD $m$ updates its current decision $a_m$ to the decision $a_m^{\prime}$ that leads to a change in its utility function, i.e., $U_m(a_m,a_{-m})-U_m(a_m^{\prime},a_{-m})$. Based on the definition of the potential function, i.e., Definition~\ref{def:def2}, we demonstrate through the following three cases which also leads to an equal change in the potential function. 

\par \textit{Case 1}: Suppose that $a_m = l$ and $a_m^{\prime} = u$. According to~\eqref{eq.PF}, we can obtain the following conclusion
\begin{sequation}
    \begin{aligned}
    \label{eq.change-1}
    F&(a_m,a_{-m})-F(a_m^{\prime},a_{-m})\\
    =&\ -Q_{u1}E_{u,m}^{\mathrm{comp}}/V-\phi_{m}\phi^{\leq m}(a_m^{\prime},a_{-m})-\gamma_{m}\gamma^{\leq m}(a_m^{\prime},a_{-m})\\
    &\ -\phi_{m}\phi^{> m}(a_m^{\prime},a_{-m})-\gamma_{m}\gamma^{> m}(a_m^{\prime},a_{-m})+U_m^{\mathrm{LC}}\\
    =&\ U_m(a_m,a_{-m})-U_m(a_m^{\prime},a_{-m}).
    \end{aligned}
\end{sequation}

\par \textit{Case 2}: Suppose that $a_m = l$ and $a_m^{\prime} = c$. According to~\eqref{eq.PF}, we can obtain the following conclusion
\begin{sequation}
    \begin{aligned}
    \label{eq.change-2}
    F&(a_m,a_{-m})-F(a_m^{\prime},a_{-m})\\
    =&\ -\sum_{s\in \mathcal{S}}I_{\{b_u^*=s\}}(Q_{u1}E_{u,m}^{\mathrm{trans}}/V+D_mL_s)-\gamma_{m}\gamma^{\leq m}(a_m^{\prime},a_{-m})\\
    &\ -\gamma_{m}\gamma^{> m}(a_m^{\prime},a_{-m})+U_m^{\mathrm{LC}}\\
    =&\ U_m(a_m,a_{-m})-U_m(a_m^{\prime},a_{-m}).
    \end{aligned}
\end{sequation}

\par \textit{Case 3}: Suppose that $a_m = u$ and $a_m^{\prime} = c$. According to~\eqref{eq.PF}, we can obtain the following conclusion
\begin{sequation}
    \begin{aligned}
    \label{eq.change-3}
    F&(a_m,a_{-m})-F(a_m^{\prime},a_{-m})\\
    =&\ Q_{u1}E_{u,m}^{\mathrm{comp}}/V+\phi_{m}\phi^{\leq m}(a_m,a_{-m})+\phi_{m}\phi^{> m}(a_m,a_{-m})\\
    &\ -\sum_{s\in \mathcal{S}}I_{\{b_u^*=s\}}(Q_{u1}E_{u,m}^{\mathrm{trans}}/V+D_mL_s)\\
    =&\ U_m(a_m,a_{-m})-U_m(a_m^{\prime},a_{-m}).
    \end{aligned}
\end{sequation}

\par Therefore, we can conclude that the MISD-TOG is an exact potential game.
\end{proof}

\par Finally, let us explore the impact of the constraint (\ref{Pc}) on the game $\Gamma$. This constraint may render some strategy profiles in $\mathbb{A}$ becoming infeasible, and this leads to a new game $\Gamma^{\prime}=\{\mathcal{M},\mathbb{A}^{\prime}, (U_m(\mathcal{A}))_{m\in \mathcal{M}}\}$. Theorem~\ref{theo:theo_cons} demonstrates that the game $\Gamma^{\prime}$ is also an exact potential game.

%
%
\begin{theorem}
\label{theo:theo_cons}
$\Gamma^{\prime}$ possesses the same potential function as $\Gamma$, which also is an exact potential game.
\end{theorem}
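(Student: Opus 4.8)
The plan is to show that $\Gamma^{\prime}$ inherits the exact potential structure of $\Gamma$ verbatim, so that no new potential function needs to be constructed. The guiding observation is that the defining identity~\eqref{PG-def} of an exact potential game is imposed separately on each pair of profiles differing in a single coordinate; since $\Gamma^{\prime}$ is obtained from $\Gamma$ only by deleting infeasible profiles, while leaving every utility $U_m$ and the potential $F$ untouched on the surviving profiles, the identity should transfer to $\Gamma^{\prime}$ without any recomputation. Concretely, I would argue that $F$ restricted to $\mathbb{A}^{\prime}$ is again an exact potential function, and then invoke Definition~\ref{def:def_FIP} to conclude the existence of a Nash equilibrium and an FIP.

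First I would make the feasible strategy space precise. Because the UAV decisions $\{b_u,\mathcal{F},\mathcal{W},\mathbf{q}_{u^\prime}\}$ are announced and held fixed while the IoTDs play, the latency terms $T_m^{\mathrm{UC}}$ and $T_m^{\mathrm{CC}}$ in constraint~(\ref{Pc}) depend only on $a_m$, so the deadline constraint decouples across players: it confines each IoTD $m$ to an individual admissible set $\mathbf{A}_m^{\prime}\subseteq\{l,u,c\}$ independent of $a_{-m}$. Moreover, the choice $a_m=l$ switches off both indicators in~(\ref{Pc}) and is therefore always feasible, which guarantees $l\in\mathbf{A}_m^{\prime}$ and hence $\mathbf{A}_m^{\prime}\neq\varnothing$ for every $m$. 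Consequently $\mathbb{A}^{\prime}=\prod_{m\in\mathcal{M}}\mathbf{A}_m^{\prime}$ is a nonempty product space with finite strategy sets, so $\Gamma^{\prime}$ is a well-defined strategic game.

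Next I would verify condition~\eqref{PG-def} on $\mathbb{A}^{\prime}$. For any player $m$, any two admissible strategies $a_m,a_m^{\prime}\in\mathbf{A}_m^{\prime}$, and any feasible $a_{-m}$, both profiles $(a_m,a_{-m})$ and $(a_m^{\prime},a_{-m})$ lie in $\mathbb{A}^{\prime}\subseteq\mathbb{A}$. Since Theorem~\ref{theorem-PG} establishes that $U_m(a_m,a_{-m})-U_m(a_m^{\prime},a_{-m})=F(a_m,a_{-m})-F(a_m^{\prime},a_{-m})$ holds for all such pairs in the larger space $\mathbb{A}$, it holds a fortiori for the sub-collection of pairs lying in $\mathbb{A}^{\prime}$, because the utilities and potential are literally the same maps, merely restricted in domain. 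This identifies $F$ as an exact potential for $\Gamma^{\prime}$, and Definition~\ref{def:def_FIP} then yields the Nash equilibrium and the FIP.

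I expect the only delicate point to be the well-definedness of $\mathbb{A}^{\prime}$ rather than the inheritance of~\eqref{PG-def}, which is essentially automatic. The hard part is justifying that trimming infeasible profiles neither couples the players' admissible sets nor empties any of them; otherwise $\Gamma^{\prime}$ would degenerate into a generalized coupled-constraint game to which Definition~\ref{def:def_FIP} need not apply. Both issues are settled by the two structural facts above, namely the decoupling of~(\ref{Pc}) under fixed UAV decisions and the universal feasibility of local computing, so the remaining work is purely a matter of stating them carefully.
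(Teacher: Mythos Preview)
Your proposal is correct and follows the same route as the paper: the paper's proof simply observes that since the exact-potential identity~\eqref{PG-def} holds on all of $\mathbb{A}$, it continues to hold when restricted to the subset $\mathbb{A}^{\prime}\subseteq\mathbb{A}$. Your version is in fact more careful than the paper's two-sentence argument, since you additionally verify that $\mathbb{A}^{\prime}$ is a nonempty \emph{product} space (via the decoupling of~(\ref{Pc}) under fixed UAV decisions and the universal feasibility of $a_m=l$), which is needed for Definition~\ref{def:def_FIP} to apply but which the paper leaves implicit.
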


%
%
\begin{proof}
Since $\Gamma$ is an exact potential game, the equality (\ref{PG-def}) holds. Therefore, it remains valid if we restrict $(a_m,a_{-m})$ and $(a^{\prime}_m,a_{-m})$ to the new strategy space $\mathbb{A}^{\prime}$, which is a subset of $\mathbb{A}$. This validates the theorem..
\end{proof}

\par The proposed game-theoretic algorithm enables decentralized decision-making through interactions between ISDs and between ISDs and the UAV, effectively reducing computational complexity. Furthermore, with the assistance of digital twins, the interaction process can be simulated within the digital twin environment, thereby significantly reducing the communication overhead caused by the interactions.

%
%
\subsection{Main Steps of ODOA and Performance Analysis}
\label{sec:Performance Analysis}

\par In this section, the main steps of ODOA are described in Algorithm \ref{Algorithm 3}, and the corresponding analysis is provided. 

\begin{algorithm}
    \label{Algorithm 3}
    \SetAlgoLined
    \textbf{Initialization:} 
    $TUC = 0$, $\mathbf{q}_u(0)$, $Q_{u1}(0)$, $Q_{u2}(0)$\;
    \For{$t=1$ to $t=T$}
    {
        Acquire the ISD information $\{\mathbf{St}_m^{\text{ISD}}(t)\}_{m\in \mathcal{M}}$\;
        Calculate $L_s(t)\ (s\in \mathcal{S}(t))$ from the digital twin based on Eq.~\eqref{eq.latency-prediction}\;
        Based on Eqs.~\eqref{eq.resource-allocation-solution} and~\eqref{eq.d_u-solution}, obtain $\mathcal{A}^{*}$ by utilizing the exact potential game\;
        Calculate $\{b_u^{*},\mathcal{F}^{*},\mathcal{W}^{*}\}$ based on Eqs.~\eqref{eq.resource-allocation-solution} and~\eqref{eq.d_u-solution}\;
        Obtain $\mathbf{q}_{u^\prime}^{*}$ by solving problem $\mathbf{P1.2}$\;
        All ISDs execute their tasks according to $\mathcal{A}^{*}$ and obtain the respective cost $C_m^{*}(t)$\;
        Obtain system cost $C_s(t)=\sum_{m=1}^MC_m^{*}(t)$\;
        $TUC=TUC+C_s(t)$\;
        Update the queues $Q_{u1}(t+1)$ and $Q_{u2}(t+1)$ in the digital twin based on Eq.~\eqref{eq.queue}\;
        Update $t=t+1$\;
    }
    $TUC=TUC/T$\;
    \Return{Time-averaged ISD cost $TUC$.}
    \caption{ODOA}
\end{algorithm}

%
%
\begin{theorem}
\label{theorem-energy}
    The proposed ODOA can satisfy the UAV energy constraint defined in (\ref{eq.eng-cons}).
\end{theorem}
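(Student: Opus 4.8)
The plan is to prove the claim via the standard two-step route of Lyapunov optimization: first establish that both virtual energy queues $Q_{u1}(t)$ and $Q_{u2}(t)$ are \emph{mean-rate stable}, i.e. $\lim_{T\to\infty}\mathbb{E}\{Q_{ui}(T)\}/T = 0$ for $i\in\{1,2\}$, and then convert mean-rate stability into the long-term energy constraint (\ref{eq.eng-cons}) through a telescoping argument on the queue recursion (\ref{eq.queue}).

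For the conversion step, I would start from the queue update (\ref{eq.queue}), which implies $Q_{ui}(t+1)\geq Q_{ui}(t)+E_{ui}(t)-\bar{E}_{ui}$ for each $i$. Summing this inequality over $t=1,\dots,T$, using the initialization $Q_{ui}(1)=0$, taking expectations and dividing by $T$ yields $\frac{1}{T}\sum_{t=1}^{T}\mathbb{E}\{E_{ui}(t)\}\leq \bar{E}_{ui}+\mathbb{E}\{Q_{ui}(T+1)\}/T$. Letting $T\to\infty$ and invoking mean-rate stability kills the last term, so the time-averaged computation-plus-transmission energy is bounded by $\bar{E}_{u1}$ and the time-averaged propulsion energy by $\bar{E}_{u2}$. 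Adding the two bounds and using $E_u(t)=E_{u1}(t)+E_{u2}(t)$ together with $\bar{E}_{u1}+\bar{E}_{u2}=\bar{E}_u$ then gives exactly (\ref{eq.eng-cons}).

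The heart of the proof is therefore establishing mean-rate stability, and here I would lean on Theorem~\ref{the:drift-plus-penalty}. Because the per-slot problem $\textbf{P}^{\prime}$ is precisely the minimization of the decision-dependent part of the drift-plus-penalty bound (\ref{eq.theorem1}), the decisions produced by ODOA make that bound no larger than the bound attained by any other feasible policy. I would introduce a reference policy $\Pi$ that keeps the UAV within budget in every slot, i.e. $E_{ui}^{\Pi}(t)\leq \bar{E}_{ui}$ (for instance an all-local offloading decision combined with a minimum-power flight, which forces $E_{u1}^{\Pi}=0$ and keeps the propulsion energy within $\bar{E}_{u2}$ under a mild feasibility assumption on the budget split). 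Evaluating the bound (\ref{eq.theorem1}) for $\Pi$ makes both queue-weighted terms nonpositive, so under ODOA one obtains $\Delta L(\boldsymbol{\Theta}(t))+V\mathbb{E}\{C_s(t)\mid\boldsymbol{\Theta}(t)\}\leq W+VC^{\max}$, where $C^{\max}$ is a finite upper bound on the per-slot system cost. Since the cost term is nonnegative, this leaves $\Delta L(\boldsymbol{\Theta}(t))\leq W+VC^{\max}$. Taking expectations, telescoping over $t=1,\dots,T$, and using $L(\boldsymbol{\Theta}(1))=0$ gives $\mathbb{E}\{L(\boldsymbol{\Theta}(T+1))\}\leq (W+VC^{\max})T$; recalling $L(\boldsymbol{\Theta}(t))=((Q_{u1}(t))^2+(Q_{u2}(t))^2)/2$ and applying Jensen's inequality yields $\mathbb{E}\{Q_{ui}(T+1)\}\leq\sqrt{2(W+VC^{\max})T}$, which is $o(T)$ and hence establishes mean-rate stability.

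I expect the main obstacle to be the justification that ODOA actually drives the drift-plus-penalty below the reference bound. ODOA solves $\textbf{P}^{\prime}$ only approximately — the offloading layer returns a Nash equilibrium of the potential game rather than a global minimizer, and the trajectory layer relies on SCA — so I must argue that its output still satisfies $\text{RHS}^{\text{ODOA}}\leq \text{RHS}^{\Pi}$, or at least is dominated by $\Pi$ in the drift sense. A clean route is to note that the energy-relevant subproblems (resource allocation and satellite selection) are solved exactly by Theorems~\ref{the:resouce-allocation} and~\ref{the:satellite-selection}, and that the all-local baseline lies inside the game's strategy space, so the equilibrium cost cannot exceed it once the queue weighting is folded in. The secondary obstacle is the Slater-type feasibility assumption guaranteeing the existence of $\Pi$ with $E_{ui}^{\Pi}\leq\bar{E}_{ui}$, together with the uniform boundedness of $E_{ui}(t)$ and $C_s(t)$ needed for $C^{\max}$ and $W$ to be finite; these follow from the bounded task sizes, bounded transmit powers, and the bounded propulsion power model (\ref{eq.prop-energy}).
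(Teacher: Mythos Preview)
Your proposal is correct and follows the same two-step skeleton as the paper: establish stability of the virtual queues $Q_{u1},Q_{u2}$, then telescope the queue recursion to recover the time-averaged energy bound. The paper's version is considerably terser than yours: for the first step it simply invokes Theorem~4.13 of Neely~\cite{2010Neely} to assert rate stability, and for the second it cites the sample-path property (Lemma~2.1 of~\cite{2010Neely}) to obtain the inequality $\frac{Q_{u1}(T)+Q_{u2}(T)}{T}\geq \frac{1}{T}\sum_{t=1}^{T}(E_u(t)-\bar{E}_u)$, then passes to the limit. What you have done is unpack those citations into a self-contained argument via a Slater-type reference policy, a drift bound, and Jensen's inequality; this buys you a proof that does not rely on external black boxes and, because you work in expectation throughout (mean-rate stability rather than almost-sure rate stability), your conclusion matches the expectation in~(\ref{eq.eng-cons}) more directly than the paper's sample-path route. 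The obstacles you flag---approximate optimality of the Nash equilibrium and SCA layers, and the feasibility of the all-local baseline---are genuine technical points that the paper sidesteps by citing Neely wholesale, so your treatment is in fact more careful on those fronts.
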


\begin{proof}
   According to Theorem 4.13 of~\cite{2010Neely}, we can conclude that all virtual queues are rate stable. Therefore, we have
\begin{equation}
    \label{eq.stable}
    \lim _{T \rightarrow+\infty} \frac{Q_{u1}(T)+Q_{u2}(T)}{T} = 0\ \text{with probability 1},\ \forall n\in \mathcal{N}. 
\end{equation}

\par Using the sample path property (Lemma 2.1 of ~\cite{2010Neely}), we have
\begin{sequation}
    \label{eq.spp}
\frac{Q_{u1}(T)+Q_{u2}(T)}{T}-\frac{Q_{u1}(1)+Q_{u2}(1)}{T} \geq \frac{1}{T} \sum_{t=1}^T \left(E_u(t)-\bar{E}_u\right).
\end{sequation}

\par By taking the infinite limit on both sides of~\eqref{eq.spp} and given that $Q_{u1}(1)=0$, $Q_{u2}(1)=0$, we can prove that formula (\ref{eq.eng-cons}) holds. 
\end{proof}

%
%
\begin{theorem}
    \label{the:complexity}
    The proposed ODOA has a worst-case polynomial complexity per time slot, i.e., $\mathcal{O}(I_cM+M^{3.5}\log_2(\frac{1}{\varepsilon}))$, wherein $M$ indicates the number of ISDs, $I_c$ denotes the number of iterations needed for MISD-TOG to reach the Nash equilibrium, and $\varepsilon$ is the accuracy parameter for SCA in solving problem $\mathbf{P1.2}^{\prime}$.
\end{theorem}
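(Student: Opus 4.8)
The plan is to bound the per-slot running cost of Algorithm~\ref{Algorithm 3} by analyzing each of its constituent operations and then retaining only the dominant terms. I would partition the work done in one time slot into three groups: (i) the bookkeeping steps (information acquisition, latency prediction, cost evaluation, and queue update); (ii) the game-theoretic offloading decision via MIoTD-TOG together with the closed-form resource allocation and satellite selection; and (iii) the SCA-based UAV trajectory optimization. Establishing that the first group is $O(M)$ (up to the fixed number $N$ of satellites) is routine, since each of these steps touches each IoTD a constant number of times and the latency prediction in~\eqref{eq.latency-prediction} is evaluated once per accessible satellite.

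For the second group, the key observation is that the best-response dynamics of the MIoTD-TOG converge in a finite number of rounds by the finite-improvement-path property (Definition~\ref{def:def_FIP}), which is guaranteed because the game is an exact potential game (Theorem~\ref{theorem-PG}); I denote by $I_c$ the number of such rounds. Within each round, every player evaluates its three candidate strategies $\{l,u,c\}$, and by maintaining the running aggregates $\phi^{\leq m}$ and $\gamma^{\leq m}$ together with the satellite-selection minimizer from Theorem~\ref{the:satellite-selection}, each best-response update costs $O(1)$ plus an $O(N)$ satellite minimization, so one round costs $O(M)$ and all rounds cost $O(I_c M)$. The closed-form resource allocation of Theorem~\ref{the:resouce-allocation} only requires forming normalization sums over the offloading set, which is also $O(M)$ and hence absorbed.

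The main obstacle, and the term that dominates the bound, is the third group. Here problem $\mathbf{P1.2}^{\prime}$ is a convex program whose decision vector $(\mathbf{q}_{u^\prime},\boldsymbol{\zeta},\xi)$ has $O(M)$ components and $O(M)$ constraints after the convexifications of Theorems~\ref{pro:pro5-2-1} and~\ref{pro:pro5-2-2}. Solving it to accuracy $\varepsilon$ with a standard interior-point method requires $O(\sqrt{M})$ Newton iterations, each dominated by the $O(M^{3})$ cost of the associated linear system, yielding $O(M^{3.5}\log_2(\frac{1}{\varepsilon}))$; since the outer SCA loop terminates in a constant number of iterations, this is the total cost of trajectory optimization. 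The hard part of the argument is to justify this iteration count and per-iteration cost rigorously, i.e., to verify that $\mathbf{P1.2}^{\prime}$ fits the self-concordant-barrier framework so that the $O(\sqrt{M}\log_2(\frac{1}{\varepsilon}))$ interior-point iteration complexity applies. Combining the three groups and discarding dominated terms then yields the claimed per-slot complexity $\mathcal{O}(I_cM+M^{3.5}\log_2(\frac{1}{\varepsilon}))$.
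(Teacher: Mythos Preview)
Your proposal is correct and follows essentially the same decomposition as the paper's proof: latency prediction (which the paper treats as $\mathcal{O}(1)$), the potential-game best-response dynamics for $\mathbf{P2}$ contributing $\mathcal{O}(I_cM)$, and the SCA/interior-point solution of $\mathbf{P1.2}^{\prime}$ contributing $\mathcal{O}(M^{3.5}\log_2(\frac{1}{\varepsilon}))$. The paper's own argument is considerably terser---it simply cites external references for the two dominant terms rather than justifying the interior-point iteration count and per-iteration linear-algebra cost as you do---so your more explicit treatment is, if anything, a strengthening of the same approach.
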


\begin{proof}
In general, ODOA contains a latency prediction process and a decision making process. For latency prediction, the complexity of the proposed algorithm can be regarded as $\mathcal{O}(1)$. For decision making, the complexity of the proposed algorithm mainly consists of solving problems $\mathbf{P2}$ and $\mathbf{P1.2}$. On the one hand, based on the analysis of~\cite{2016Potential}, the computational complexity of solving problem $\mathbf{P2}$ can be calculated as $\mathcal{O}(I_cM)$. On the other hand, according to the analysis in~\cite{Wang2022}, the computational complexity of solving problem $\mathbf{P1.2}$ is $\mathcal{O}(M^{3.5}\log_2(\frac{1}{\varepsilon}))$. Therefore, the worst-case computational complexity of ODOA can be obtained as $\mathcal{O}(I_cM+M^{3.5}\log_2(\frac{1}{\varepsilon}))$.
\end{proof}

%
%
\section{Simulation Results} 
\label{sec:Simulation Results}
\par In this section, the performance of the designed ODOA is evaluated through simulation experiments.
\subsection{Simulation Setup}
\begin{table}
	\setlength{\abovecaptionskip}{-1em}%
	\setlength{\belowcaptionskip}{0pt}%
	\caption{Simulation Parameters}
	\label{parameters}
	\renewcommand*{\arraystretch}{1.15}
	\begin{center}
		\begin{tabular}{p{.06\textwidth}<{\centering}|p{.21\textwidth}|p{.13\textwidth}}
			\hline
			\hline
			\textbf{Symbol}&\textbf{Meaning}&\textbf{Value (Unit)}\\
			\hline
			$D_m$ &Task size &$[0.5,3]$ Mb\\
			\hline
			$\eta_m$ &Computation intensity of tasks &$[500,1500]$ cycles/bit \\
			\hline
			$T_m^{\text{max}}$ &Maximum tolerable delay of tasks &$1$ s\\
			\hline
			$\alpha$ &Memory level of velocity &$0.9$ \\
                \hline
			$\overline{\mathbf{v}}_m$ &the velocity of ISD $m$ &$1$ m/s~\cite{Yang2022} \\
                \hline
			$\sigma_m$ & The asymptotic standard deviation of velocity &$2$~\cite{Yang2022} \\
			\hline
			$F_n^{\text{max}}$ &Computation resources of SUAVs &$20$ GHz\\
			\hline
			$v_n^{\text{max}}$ &Maximum flight speed of SUAVs &$25$ m/s~\cite{Yang2022}\\   
   			\hline
			$d^{\text{min}}$ &Minimum safety distance &$10$ m\\
			\hline
			$F_u^{\text{max}}$ &Computation resources of LUAV &$30$ GHz\\
			\hline
			$B_s$ &Bandwidth of MEC server $s$ &$10$ MHz $(s=u)$, \newline $5$ MHz $(s\in \mathcal{N})$ \\
			\hline
			$p_m$ &Transmission power of ISD $m$  &$20$ dBm \\
			\hline
			$\varpi_0$ &Noise power &$-98$ dBm\\
			\hline
			$c_1$, $c_2$ &Parameters for LoS probability &10, 0.6~\cite{communicationmodel1}\\
			\hline
			$\eta^{\text{LoS}},\eta^{\text{nLoS}}$ &Additional losses for LoS and nLoS links &1.0 dB, 20 dB\\ 
			\hline
			$\kappa$ &CPU parameters &$10^{-28}$\\
			\hline
			$\varpi$  &Energy consumption per unit CPU cycle of SUAVs&$8.2\times10^{-27}$ J~\cite{JiangDXI23} \\
			\hline
			$C_1$, $C_2$, $C_3$, $C_4$ & UAV propulsion power consumption parameters&80, 22, 263.4, 0.0092~\cite{Yang2022} \\
                \hline
                $\bar{E}_n$ & Energy budget per time slot for SUAV $n$ & 220 J\\
			\hline
			$U_{\text{p}}$ &Tip speed of the rotor&120 m/s \\
			\hline
			$\gamma_m^{\text{T}}$, $\gamma_m^{\text{E}}$ &The weight coefficients of task completion delay and energy consumption for ISD $m$ & 0.7, 0.3\\
			\hline
		\end{tabular}
	\end{center}
\end{table}
\subsubsection{Scenario Setting} We consider a SAGIMEC network, where a satellite network, a UAV, and a cloud computing center collaborate to provide computing offload services to $20$ ISDs within a $600\times600\ \text{m}^2$ service area. Furthermore, each epoch lasts 300 time slots with duration $\tau=1\ \text{s}$.
\subsubsection{Parameter Setting} For the satellite network, the unit data round-trip latency $L_s(t)$ (in s/bit) for each satellite is generated from a truncated Gaussian distribution~\cite{gao2021energy} with a mean of $(L_s^{\text{min}}+L_s^{\text{max}})/2$, where $L_s^{\text{min}}\in(15\times10^{-8},20\times10^{-8})$, and $L_s^{\text{max}}\in(30\times10^{-8},35\times10^{-8})$. For the UAV, we set the initial position to $\mathbf{q}_u^{\text{ini}}=[0,0]\ \text{m}$, and the fixed altitude to $H = 100\ \text{m}$. For the ISDs, the computing capacity of each ISD is randomly taken from $\{1,1.5,2\}\  \text{GHz}$. The default values for the remaining parameters are listed in Table~\ref{parameters}.
\subsubsection{Performance Metrics} We evaluate the overall performance of the proposed approach based on the following performance metrics. \textit{1) Time-averaged ISD cost} $\frac{1}{T}\sum_{t=1}^{T}\sum_{m=1}^{M}C_m(t)$, which represents the average cumulative cost of all ISDs per unit time. \textit{2) Average task completion latency} $\frac{1}{T}\sum_{t=1}^{T}\frac{1}{M}\sum_{m=1}^{M}T_m(t)$, which indicates the average latency for completing a task. \textit{3) Time-averaged ISD energy consumption} $\frac{1}{T}\sum_{t=1}^{T}\sum_{m=1}^{M}E_m(t)$, which signifies the cumulative energy consumption of ISDs over the system timeline. \textit{4) Time-averaged UAV energy consumption} $\frac{1}{T}\sum_{t=1}^{T}E_u(t)$, which means the average energy consumption of each SUAV per unit time. 
\subsubsection{Comparative Approaches} To demonstrate the effectiveness of ODOA, we compares ODOA with the following approaches. \textit{i) UAV-assisted computing (UAC)}: Tasks are executed locally on ISDs or offloaded to the UAV. \textit{ii) Equal resource allocation (ERA)~\cite{Josilo}}: The UAV allocates computing and communication resources equally. \textit{iii) $\varepsilon$-greedy~\cite{VermorelM05}}: A $\varepsilon$-greedy-based algorithm is adopted to balance exploration and exploitation. \textit{iv) Only consider QoS (OCQ)~\cite{JiangDXI23}}: Ignoring the UAV energy consumption constraint.

%
%

\subsection{Evaluation Results}
\label{subsec:Evaluation results}
\begin{figure*}[!hbt] 
	\centering
	\setlength{\abovecaptionskip}{1pt}%
	\setlength{\belowcaptionskip}{1pt}%
	\subfigure[]
	{
		\begin{minipage}[t]{0.23\linewidth}
			\centering
			\includegraphics[scale=0.38]{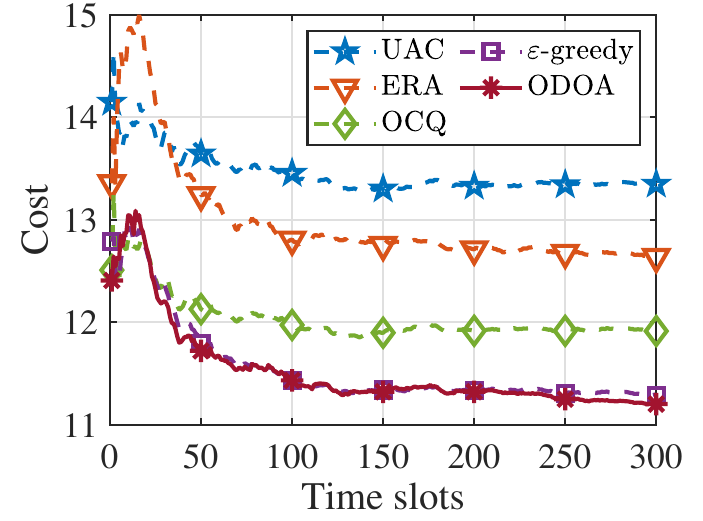}
		\end{minipage}
	}
	\subfigure[]
	{
		\begin{minipage}[t]{0.23\linewidth}
			\centering
			\includegraphics[scale=0.38]{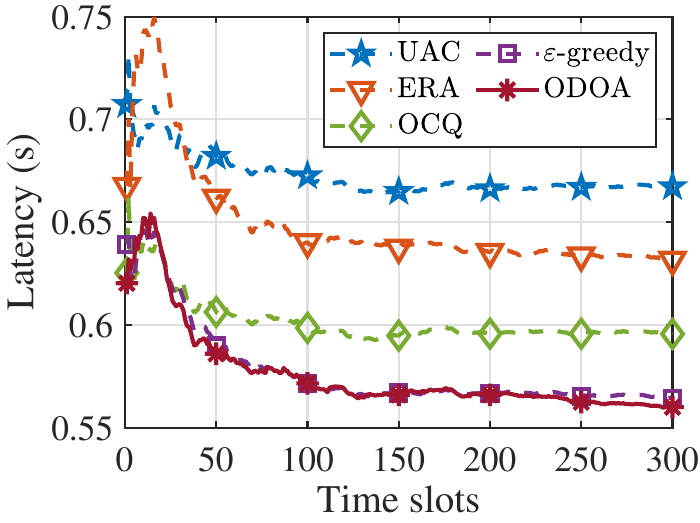}
		\end{minipage}
	}
	\subfigure[]
	{
		\begin{minipage}[t]{0.23\linewidth}
			\centering
			\includegraphics[scale=0.38]{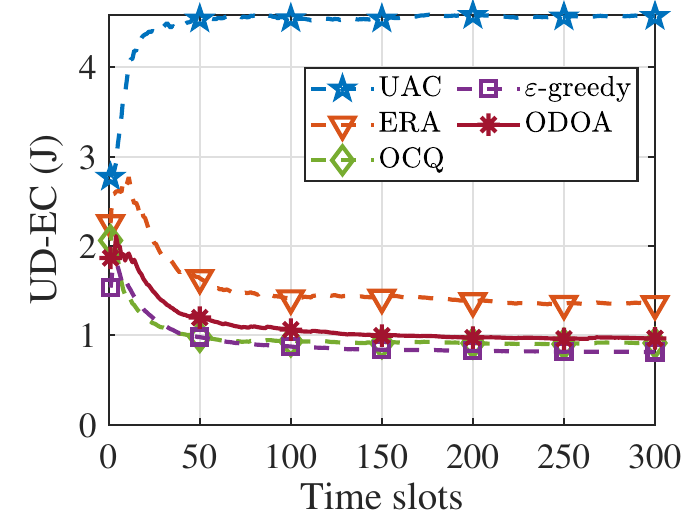}
		\end{minipage}
	}
        \subfigure[]
	{
		\begin{minipage}[t]{0.23\linewidth}
			\centering
			\includegraphics[scale=0.38]{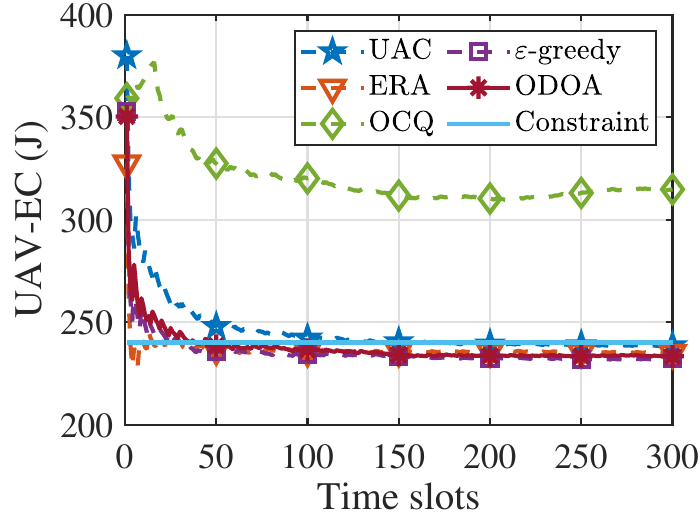}
		\end{minipage}
	}
	\caption{The impact of time slots on system performance. (a) Time-average ISD cost (Cost). (b) Average task completion latency (Latency). (c) Time-average ISD energy consumption (ISD-EC). (d) Time-average UAV energy consumption (UAV-EC).} 
	\label{fig_time}
	\vspace{-1em}
\end{figure*}
\subsubsection{Impact of Time} Figs.~\ref{fig_time}(a), \ref{fig_time}(b), \ref{fig_time}(c) and \ref{fig_time}(d) illustrate the dynamics of time-averaged ISD cost, average task completion latency, time-averaged ISD energy consumption among the five approaches, and time-averaged UAV energy consumption, respectively. It can be observed that the OCQ achieves worse performance compared the proposed ODOA in terms of time-averaged ISD cost, and average task completion latency. This is mainly due to the game-theoretic computation offloading algorithm. Specifically, regardless of the UAV energy constraint, more tasks are offloaded to the UAV, which leads to a decrease in the overall performance due to the limited computing resources of the UAV. Moreover, the proposed ODOA outperforms UAC, ERA, and $\varepsilon$-greedy with respect to time-averaged ISD cost and average task completion latency. The reasons are as follows. First, the three-tier architecture combined with cloud computing offers abundant resource supply. Second, the optimal resource allocation strategy can more effectively utilize the limited resources of the UAV. Third, the UCB-based algorithm better balances exploration and exploitation to improve the accuracy of latency prediction. Finally, as shown in Fig. \ref{fig_time}(d), the proposed ODOA can satisfy the long-term UAV energy constraint under the real-time guidance of the Lyapunov-based energy queue, which is consistent with the analysis in Theorem~\ref{theorem-energy}. 

\par In conclusion, the set of simulation results demonstrates the effectiveness of the ODOA in enhancing overall performance while adhering to the UAV energy constraint.
\begin{figure*}[!hbt] 
	\centering
	\setlength{\abovecaptionskip}{1pt}%
	\setlength{\belowcaptionskip}{1pt}%
	\subfigure[]
	{
		\begin{minipage}[t]{0.23\linewidth}
			\centering
			\includegraphics[scale=0.38]{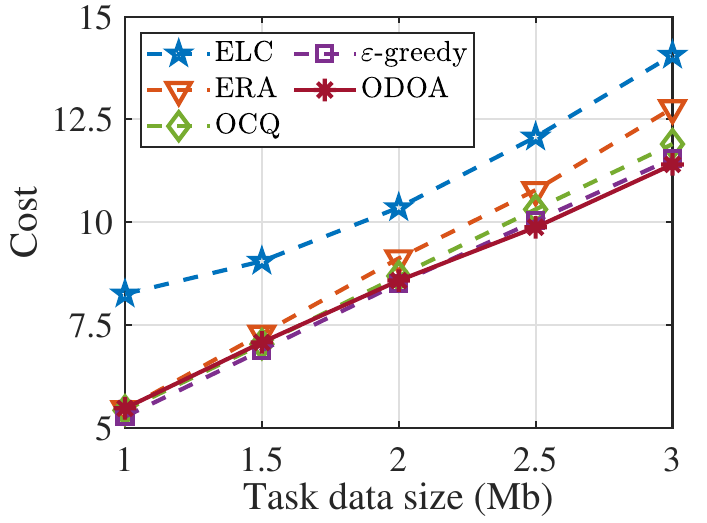}
		\end{minipage}
	}
	\subfigure[]
	{
		\begin{minipage}[t]{0.23\linewidth}
			\centering
			\includegraphics[scale=0.38]{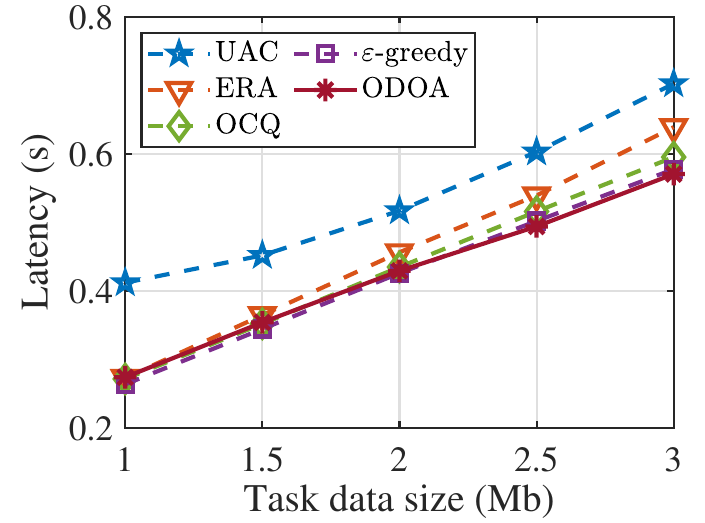}
		\end{minipage}
	}
	\subfigure[]
	{
		\begin{minipage}[t]{0.23\linewidth}
			\centering
			\includegraphics[scale=0.38]{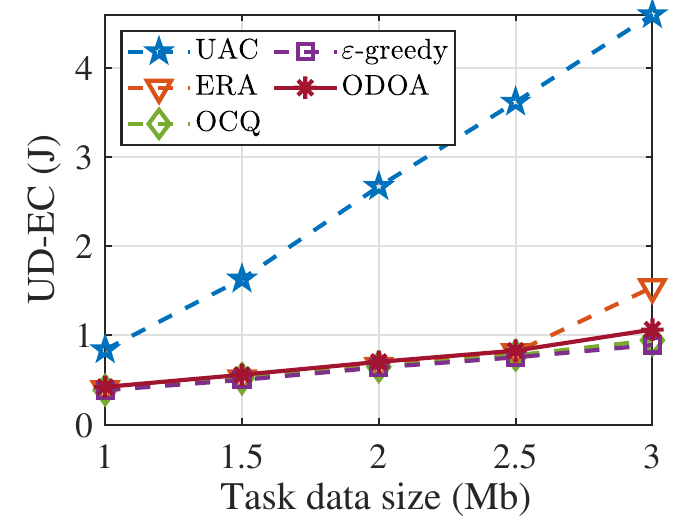}
		\end{minipage}
	}
        \subfigure[]
	{
		\begin{minipage}[t]{0.23\linewidth}
			\centering
			\includegraphics[scale=0.38]{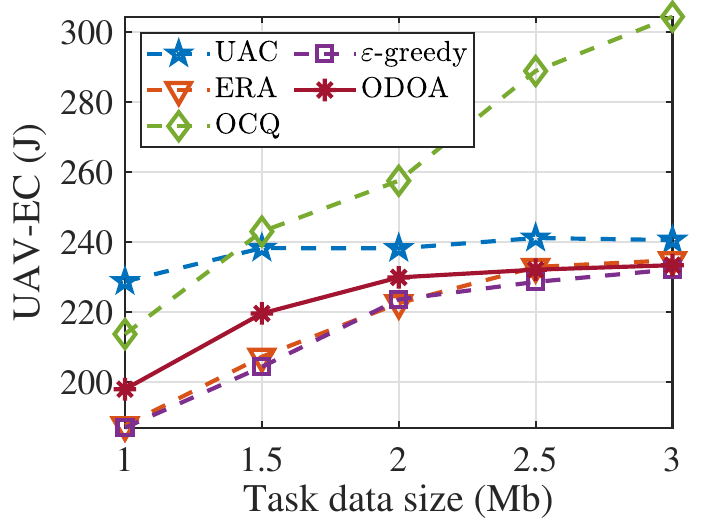}
		\end{minipage}
	}
	\caption{The impact of task data size on system performance. (a) Time-average ISD cost (Cost). (b) Average task completion latency (Latency). (c) Time-average ISD energy consumption (ISD-EC). (d) Time-average UAV energy consumption (UAV-EC).} 
	\label{fig_tasksize}
	\vspace{-1em}
\end{figure*}
\subsubsection{Impact of Task Data Size} Figs. \ref{fig_tasksize}(a), \ref{fig_tasksize}(b), \ref{fig_tasksize}(c), and \ref{fig_tasksize}(d) show the impact of different task data sizes on various performance metrics. It can be seen that with the increase in task data sizes, there is an upward trend with respect to the time-averaged ISD cost, average task completion latency, time-averaged ISD energy consumption, and time-averaged UAV energy consumption. This is expected since the larger task data size leads to higher overheads on computing, communication, and energy consumption for both ISDs and the UAV. In addition, UAC exhibits a significant growth trend with respect to the time-averaged ISD cost and average task completion latency compared to the other approaches. This is because UAC relies heavily on the limited computing resources of the UAV, which becomes a bottleneck for system performance as the size of task data continues to increase. Finally, compared with UAC, ERA, OCQ and $\varepsilon$-greedy, the proposed ODOA exhibits superior performance with respect to time-averaged ISD cost as the task data size increases, and achieves performance improvements of approximately 18.9\%, 10.7\%, 4.1\%, and 1.2\% in terms of average task completion latency when the task data size reaches 3 Mb. 

\par In conclusion, the set of simulation results shows that the proposed ODOA can effectively adapt to heavily loaded scenarios, delivering overall superior performance.
\begin{figure*}[!hbt] 
	\centering
	\setlength{\abovecaptionskip}{1pt}%
	\setlength{\belowcaptionskip}{1pt}%
	\subfigure[]
	{
		\begin{minipage}[t]{0.23\linewidth}
			\centering
			\includegraphics[scale=0.38]{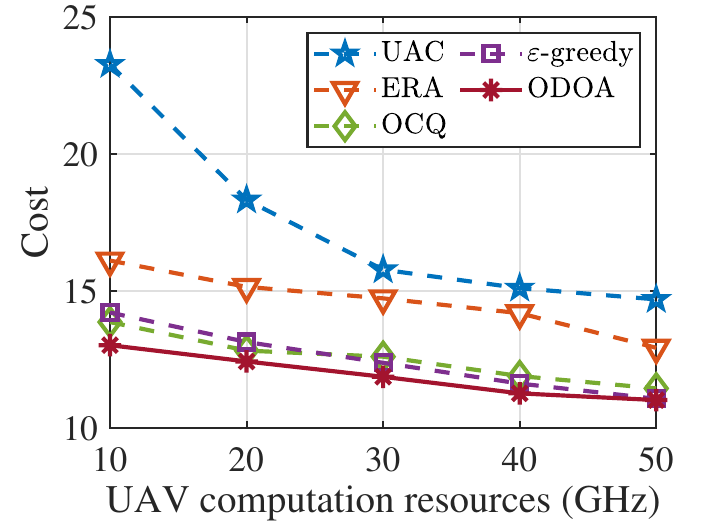}
		\end{minipage}
	}
	\subfigure[]
	{
		\begin{minipage}[t]{0.23\linewidth}
			\centering
			\includegraphics[scale=0.38]{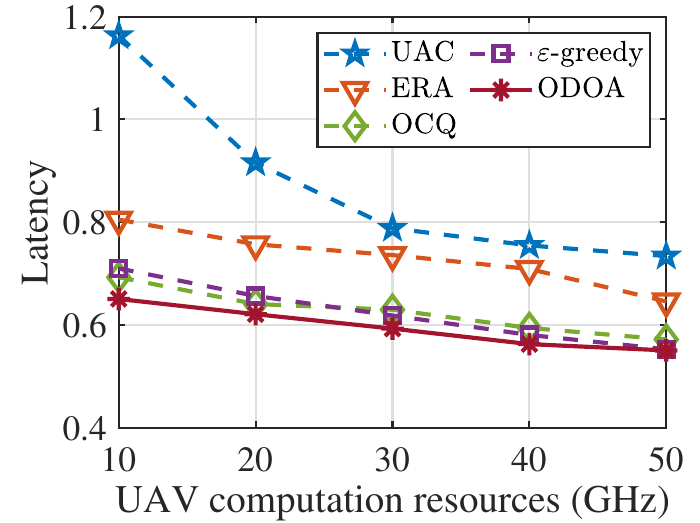}
		\end{minipage}
	}
	\subfigure[]
	{
		\begin{minipage}[t]{0.23\linewidth}
			\centering
			\includegraphics[scale=0.38]{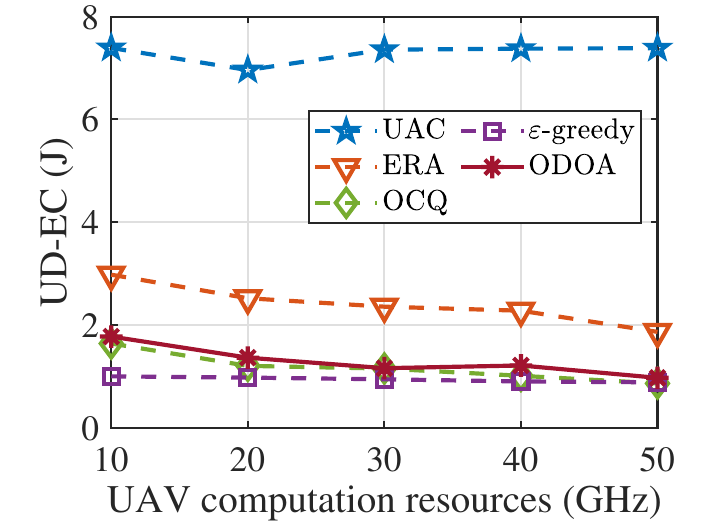}
		\end{minipage}
	}
        \subfigure[]
	{
		\begin{minipage}[t]{0.23\linewidth}
			\centering
			\includegraphics[scale=0.38]{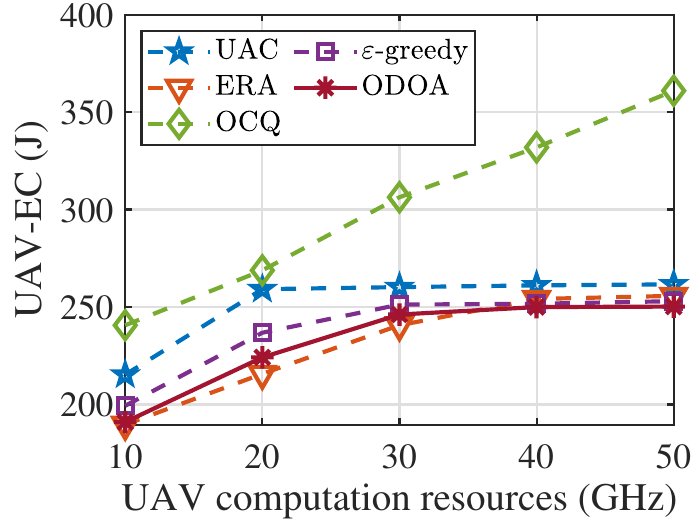}
		\end{minipage}
	}
	\caption{The impact of UAV computation resources on system performance. (a) Time-average ISD cost (Cost). (b) Average task completion latency (Latency). (c) Time-average ISD energy consumption (ISD-EC). (d) Time-average UAV energy consumption (UAV-EC).} 
	\label{fig_uav}
	\vspace{-1em}
\end{figure*}
\subsubsection{Impact of UAV Computation Resources} Figs. \ref{fig_uav}(a), \ref{fig_uav}(b), \ref{fig_uav}(c), and \ref{fig_uav}(d) compare the impact of different UAV computation resources on various performance metrics for the five approaches. It can be observed that with the increase of UAV computation resources, all approaches show a decreasing trend in terms of the time-averaged ISD cost and average task completion latency, and UAC and ODOA demonstrate gradually decreasing performance improvements. The reasons can be explained as follows. The increase of UAV computation resources provides more computing resource allocation for task execution, reducing the task execution latency. However, as UAV computation resources further increase, communication resources and the energy constraints of SUAV become bottlenecks that limit the improvement of system performance. Furthermore, EO maintains nearly constant performance in terms of the time-averaged ISD energy consumption regardless of the variations in UAV computing resources. This is mainly because the entire offloading of UAC primarily incurs transmission energy consumption for ISDs, which is independent of the UAV computation resources. 

\par Finally, ODOA outperforms UAC, ERA, OCQ, and $\varepsilon$-greedy in terms of the time-averaged ISD cost and average task completion latency, which illustrates the proposed approach enables sustainable utilization of computing resources and prevents resource over-utilization.

%
%
\section{Conclusion}
\label{sec:Conclusion}
\par In this paper, we explored the integration of computation offloading, satellite selection, resource allocation, and UAV trajectory control in digital twin-assisted SAGIMEC to advance the development of the LAE. We formally formulated the optimization problem to maximize the QoS for all ISDs. To solve this complex problem, we proposed ODOA, which combines the Lyapunov optimization framework for online control, the MAB model for online learning, and game theory for decentralized algorithm design. The mathematical analysis demonstrated that the ODOA not only meets the UAV energy constraint but also features low computational complexity. The simulation results indicate that the ODOA exhibits overall superior performance with respect to the time-averaged ISD cost, average task completion latency, and time-averaged ISD energy consumption.

%
%

\bibliographystyle{IEEEtran}
\bibliography{myref}
\begin{IEEEbiography}[{\includegraphics[width=1in,height=1.25in,clip,keepaspectratio]{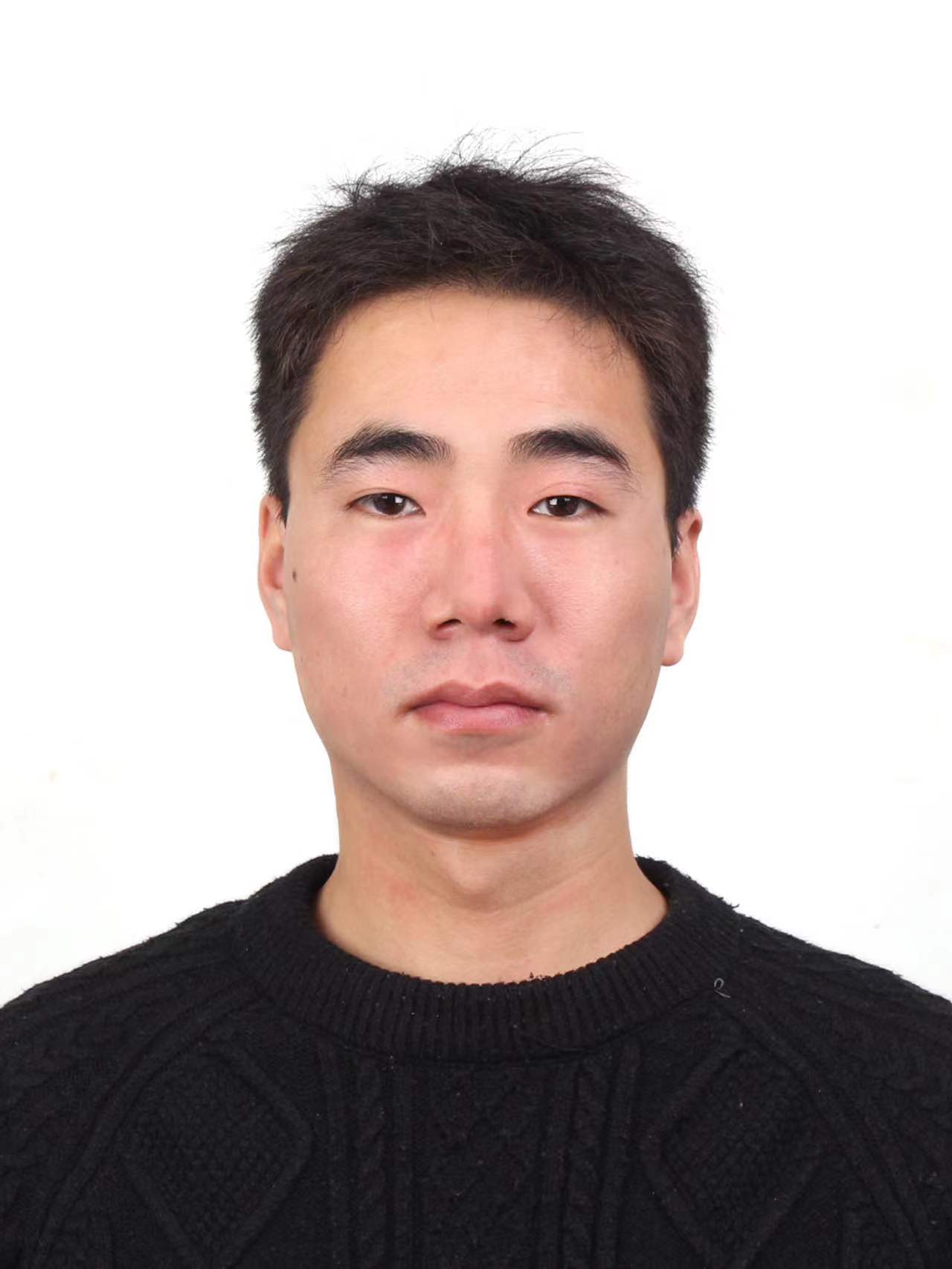}}]{Long He} received a BS degree in Computer Science and Technology from Chengdu University of Technology, Sichuan, China, in 2019. He is currently working toward the PhD degree in Computer Science and Technology at Jilin University, Changchun, China. His research interests include vehicular networks and edge computing.
\end{IEEEbiography}

\begin{IEEEbiography}[{\includegraphics[width=1in,height=1.25in,clip,keepaspectratio]{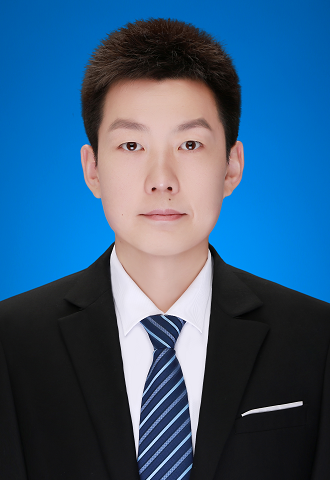}}]{Geng Sun} (S'17-M'19) received the B.S. degree in communication engineering from Dalian Polytechnic University, and the Ph.D. degree in computer science and technology from Jilin University, in 2011 and 2018, respectively. He was a Visiting Researcher with the School of Electrical and Computer Engineering, Georgia Institute of Technology, USA. He is a Professor in College of Computer Science and Technology at Jilin University, and his research interests include wireless networks, UAV communications, collaborative beamforming and optimizations.
\end{IEEEbiography}

\begin{IEEEbiography}[{\includegraphics[width=1in,height=1.25in,clip,keepaspectratio]{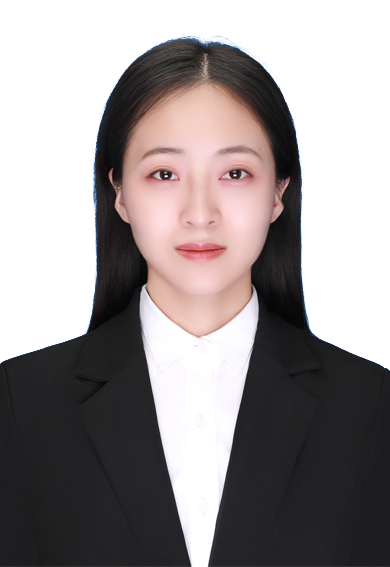}}]{Zemin Sun} received a BS degree in Software Engineering, an MS degree and a Ph.D degree in Computer Science and Technology from Jilin University, Changchun, China, in 2015, 2018, and 2022, respectively. Her research interests include vehicular networks, edge computing, and game theory. 
\end{IEEEbiography}

\begin{IEEEbiography}[{\includegraphics[width=1in,height=1.25in,clip,keepaspectratio]{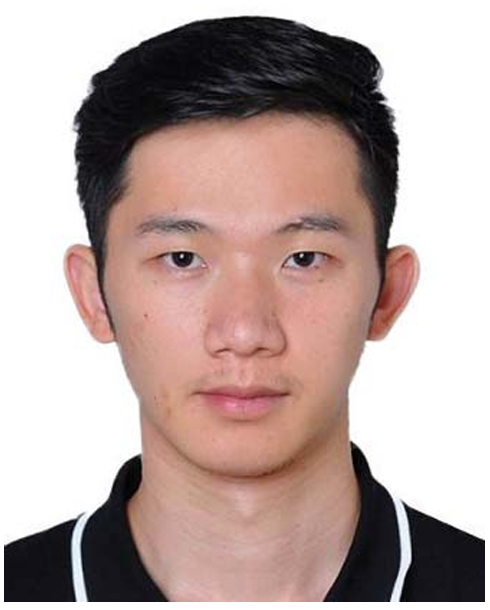}}]{Jiacheng Wang} received a bachelor’s degree from the Department of Science, Kunming University of Science and Technology in 2015, and the M.E. and Ph.D. degrees from the Department of Communication and Information Technology, Chongqing University of Posts and Telecommunications in 2018 and 2022, respectively. He is currently a Research Associate in computer science and engineering with Nanyang Technological University, Singapore. His research interests include wireless sensing, semantic communications, and metaverse.
\end{IEEEbiography}

\begin{IEEEbiography}[{\includegraphics[width=1in,height=1.25in,clip,keepaspectratio]{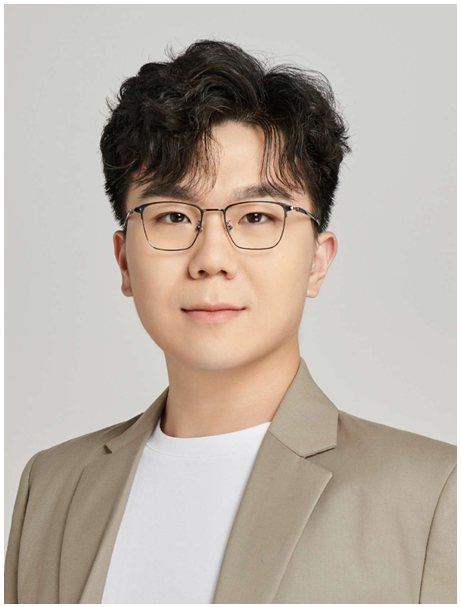}}]{Hongyang Du} is an assistant professor at the Department of Electrical and Electronic Engineering, The University of Hong Kong. He received the BEng degree from the School of Electronic and Information Engineering, Beijing Jiaotong University, Bei jing, and the PhD degree from the Interdisciplinary Graduate Program at the College of Computing and Data Science, Energy Research Institute @ NTU, Nanyang Technological University, Singapore. He serves as the Editor-in-Chief assistant of IEEE Communications Surveys \& Tutorials (2022-2024), the Editor of IEEE Transactions on Vehicular Technology, and the Guest Editor for IEEE Vehicular Technology Magazine. He is the recipient of the IEEE Daniel E. Noble Fellowship Award from the IEEE Vehicular Technology Society in 2022, the IEEE Signal Processing Society Scholarship from the IEEE Signal Processing Society in 2023, the Singapore Data Science Consortium (SDSC) Dissertation Research Fellowship in 2023, and NTU Graduate College’s Research Excellence Award in 2024. He was recognized as an exemplary reviewer of the IEEE Transactions on Communications and IEEE Communications Letters. His research interests include edge intelligence, generative AI, semantic communications, and network management.
\end{IEEEbiography}

\begin{IEEEbiography}[{\includegraphics[width=1in,height=1.25in,clip,keepaspectratio]{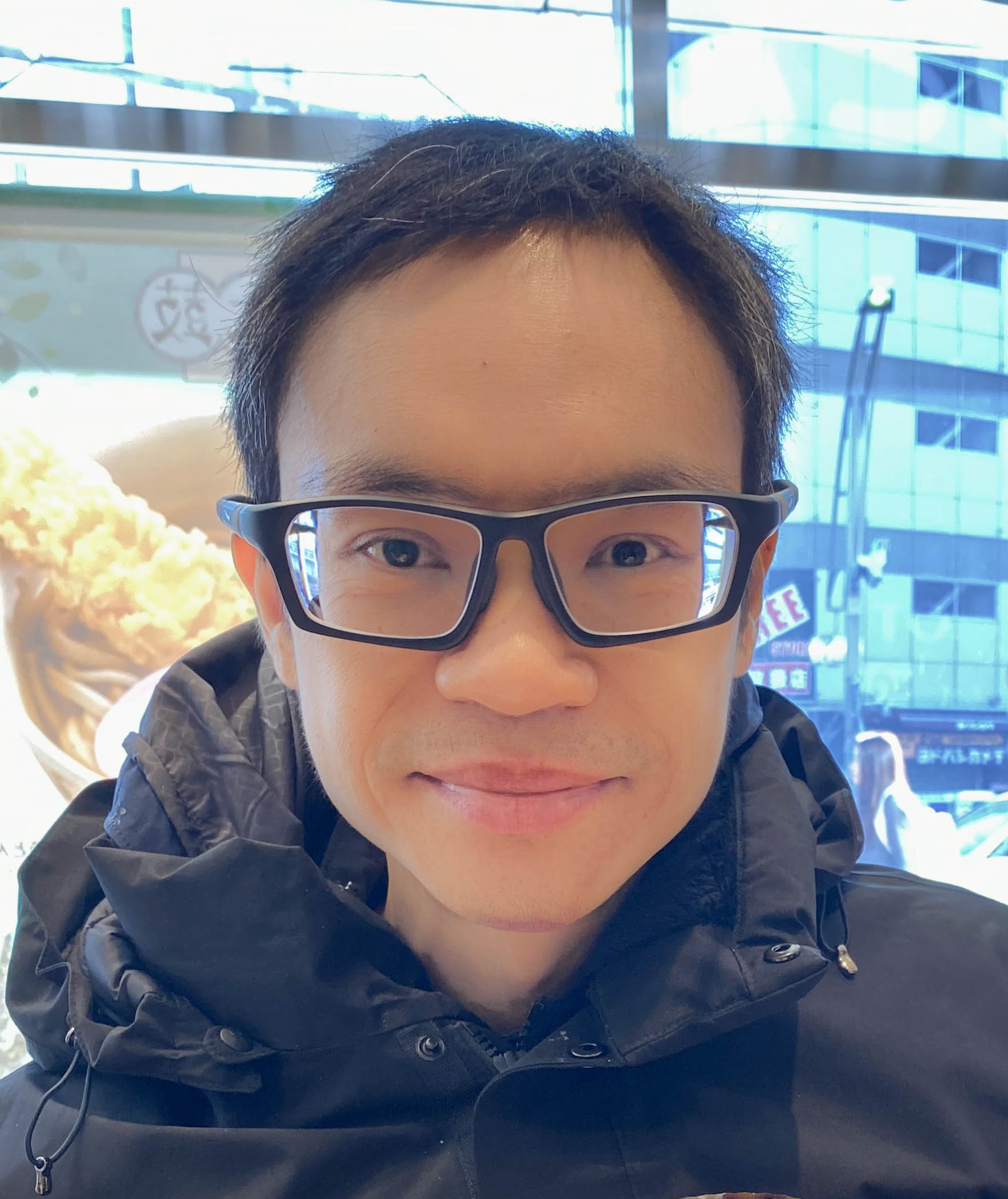}}]{Dusit Niyato} (Fellow, IEEE) received the B.Eng. degree from the King Mongkuts Institute of Technology Ladkrabang (KMITL), Thailand, in 1999, and the Ph.D. degree in electrical and computer engineering from the University of Manitoba, Canada, in 2008. He is currently a Professor with the School of Computer Science and Engineering, Nanyang Technological University, Singapore. His research interests include the Internet of Things (IoT), machine learning, and incentive mechanism design.
\end{IEEEbiography}

\begin{IEEEbiography}[{\includegraphics[width=1in,height=1.25in,clip,keepaspectratio]{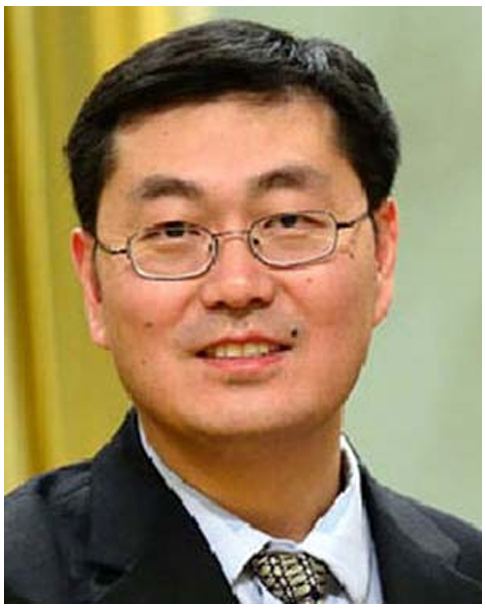}}]{Jiangchuan Liu} (Fellow, IEEE) received the BEng (cum laude) degree in computer science from Tsinghua University, Beijing, China, in 1999, and the PhD degree in computer science from the HongKong University of Science and Technology, in 2003. He is currently a full professor (with University Professorship) with the School of Computing Science, Simon Fraser University, BC, Canada. He is a fellow of the Canadian Academy of Engineering and the NSERC E.W.R. Steacie Memorial fellow. He is a steering committee member of IEEE Transactions on Mobile Computing. He was a co-recipient of the Test of Time Paper Award of the IEEE INFOCOM, in 2015, the ACM TOMCCAP Nicolas D. Georganas Best Paper Award, in 2013, and the ACM Multimedia Best Paper Award, in 2012. He is an associate editor of the IEEE/ACM Transactions on Networking, the IEEE Transactions on Big Data,and the IEEE Transactions on Multimedia.
\end{IEEEbiography}

\begin{IEEEbiography}[{\includegraphics[width=1in,height=1.25in,clip,keepaspectratio]{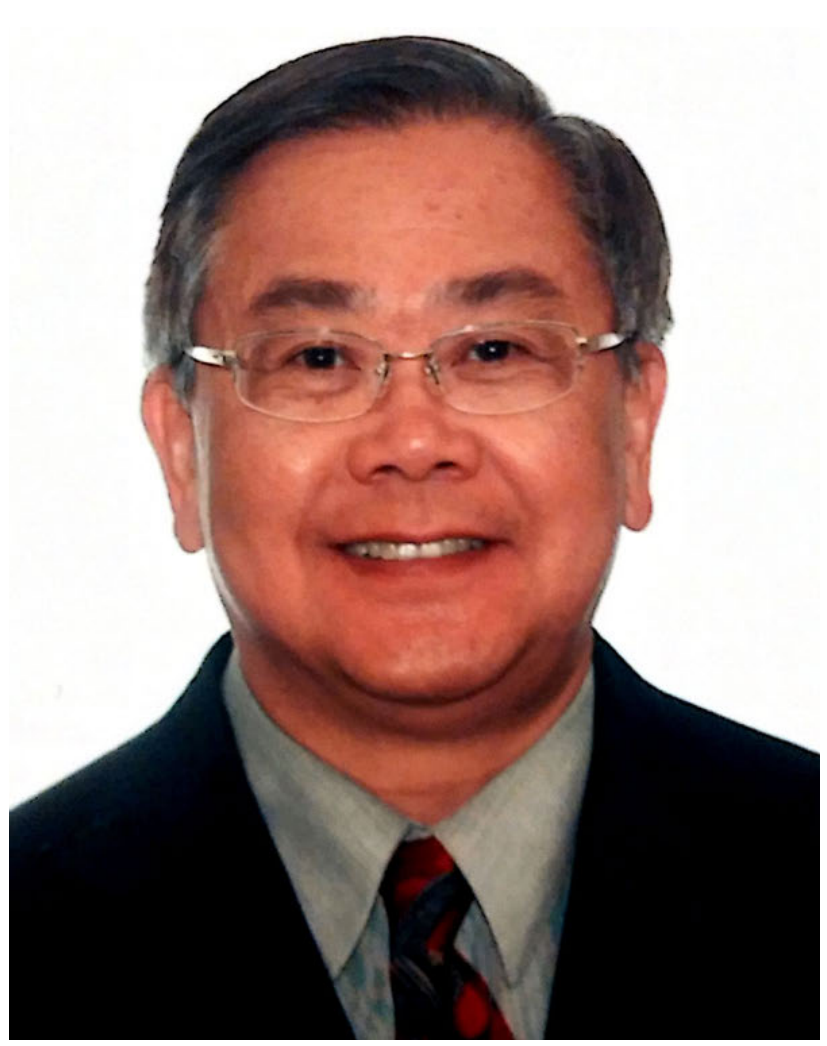}}]{Victor C. M. Leung} (Life Fellow, IEEE) is a Distinguished Professor of computer science and software engineering with Shenzhen University,
China. He is also an Emeritus Professor of electrial and computer engineering and the Director of the Laboratory for Wireless Networks and Mobile Systems at the University of British Columbia (UBC). His research is in the broad areas of wireless networks and mobile systems. He has co-authored more than 1300 journal/conference papers and book chapters. Dr. Leung is serving on the editorial boards of IEEE Transactions on Green Communications and Networking, IEEE Transactions on Cloud Computing, IEEE Access, and several other journals. He received the IEEE Vancouver Section Centennial Award, 2011 UBC Killam Research Prize, 2017 Canadian Award for Telecommunications Research, and 2018 IEEE TCGCC Distinguished Technical Achievement Recognition Award. He co-authored papers that won the 2017 IEEE ComSoc Fred W. Ellersick Prize, 2017 IEEE Systems Journal Best Paper Award, 2018 IEEE CSIM Best Journal Paper Award, and 2019 IEEE TCGCC Best Journal Paper Award. He is a Life Fellow of IEEE, and a Fellow of the Royal Society of Canada, Canadian Academy of Engineering, and Engineering Institute of Canada. He is named in the current Clarivate Analytics list of Highly Cited Researchers.
\end{IEEEbiography}
\end{document}